\newtheorem{lemma}{Lemma}
\newtheorem{proposition}{Proposition}
\newtheorem{corollary}{Corollary}
\newtheorem{definition}{Definition}
\newtheorem{example}{Example}
\newtheorem{remark}{Remark}
\def\b{\ensuremath\boldsymbol}
\newcommand{\indep}{\perp\!\!\!\perp}
\icmltitlerunning{Sampling Algorithms, from Survey Sampling to Monte Carlo Methods: Tutorial and Literature Review}
\begin{document}

\twocolumn[
\icmltitle{Sampling Algorithms, from Survey Sampling to Monte Carlo Methods: Tutorial and Literature Review}

\icmlauthor{Benyamin Ghojogh*}{bghojogh@uwaterloo.ca}
\icmladdress{Department of Electrical and Computer Engineering, 
\\Machine Learning Laboratory, University of Waterloo, Waterloo, ON, Canada}
\icmlauthor{Hadi Nekoei*}{hnekoeiq@uwaterloo.ca}
\icmladdress{MILA (Montreal Institute for Learning Algorithms) -- Quebec AI Institute, Montreal, Quebec, Canada}
\icmlauthor{Aydin Ghojogh*}{aydin.ghojogh@gmail.com}
\icmladdress{ }
\icmlauthor{Fakhri Karray}{karray@uwaterloo.ca}
\icmladdress{Department of Electrical and Computer Engineering, 
\\Centre for Pattern Analysis and Machine Intelligence, University of Waterloo, Waterloo, ON, Canada}
\icmlauthor{Mark Crowley}{mcrowley@uwaterloo.ca}
\icmladdress{Department of Electrical and Computer Engineering, 
\\Machine Learning Laboratory, University of Waterloo, Waterloo, ON, Canada}

* The first three authors contributed equally to this work.

\icmlkeywords{Tutorial}

\vskip 0.3in
]

\begin{abstract}
This paper is a tutorial and literature review on sampling algorithms. We have two main types of sampling in statistics. The first type is survey sampling which draws samples from a set or population. The second type is sampling from probability distribution where we have a probability density or mass function. In this paper, we cover both types of sampling. First, we review some required background on mean squared error, variance, bias, maximum likelihood estimation, Bernoulli, Binomial, and Hypergeometric distributions, the Horvitz–Thompson estimator, and the Markov property. Then, we explain the theory of simple random sampling, bootstrapping, stratified sampling, and cluster sampling. We also briefly introduce multistage sampling, network sampling, and snowball sampling. Afterwards, we switch to sampling from distribution. We explain sampling from cumulative distribution function, Monte Carlo approximation, simple Monte Carlo methods, and Markov Chain Monte Carlo (MCMC) methods. For simple Monte Carlo methods, whose iterations are independent, we cover importance sampling and rejection sampling. For MCMC methods, we cover Metropolis algorithm, Metropolis-Hastings algorithm, Gibbs sampling, and slice sampling. Then, we explain the random walk behaviour of Monte Carlo methods and more efficient Monte Carlo methods, including Hamiltonian (or hybrid) Monte Carlo, Adler's overrelaxation, and ordered overrelaxation. Finally, we summarize the characteristics, pros, and cons of sampling methods compared to each other. This paper can be useful for different fields of statistics, machine learning, reinforcement learning, and computational physics. 
\end{abstract}

\section{Introduction}\label{section_introduction}

Sampling is a fundamental task in statistics. However, this terminology is used for two different tasks in statistics. On one hand, sampling refers to \textit{survey sampling} which is selecting instances from a population or set:
\begin{align}
\mathcal{D} := \{x_1, x_2, \dots, x_N\},
\end{align}
where the population size is $N := |\mathcal{D}|$. 
Note that some of the instances of this population may be repetitive numbers/vectors. Survey sampling draws $n$ samples from the population $\mathcal{D}$ to have a set of samples $\mathcal{S}$ where $n := |\mathcal{S}|$. 
There are several articles and books on survey sampling such as \cite{barnett1974elements,smith1976foundations,foreman1991survey,schofield1996survey,nassiuma2001survey,chaudhuri2005survey,tille2006sampling,mukhopadhyay2008theory,scheaffer2011elementary,fuller2011sampling,tille2012survey,hibberts2012common,singh2013elements,kalton2020introduction}. 
It is a field of research in statistics, with many possible future developments \cite{brick2011future}, especially in distributed networks and graphs \cite{frank2011survey,heckathorn2017network}. 
Some of the popular methods in survey sampling are Simple Random Sampling (SRS) \cite{barnett1974elements}, bootstrapping \cite{efron1994introduction}, stratified sampling, cluster sampling \cite{barnett1974elements}, multistage sampling \cite{lance2016sampling}, network sampling \cite{schofield2011network}, and snowball sampling \cite{goodman1961snowball}. 

On the other hand, sampling can refer to drawing samples from probability distributions. 
Usually, in real-world applications, distributions of data are complicated to sample from; for example, they can be mixture of several distributions \cite{ghojogh2019fitting}. One can approximate samples from the complicated distributions by sampling from some other simple-to-sample distribution. 
The sampling methods which perform this sampling approximation are referred to as the Monte Carlo methods \cite{mackay1998introduction,bishop2006pattern,kalos2009monte,hammersley2013monte,kroese2013handbook}. Monte Carlo approximation \cite{kalos2009monte} can be used for estimating the expectation or probability of a function of data over the data distribution. Monte Carlo methods can be divided into two main categories, i.e., simple methods and Markov Chain Monte Carlo (MCMC) \cite{mackay2003information}.
Note that Monte Carlo methods are iterative.
In simple Monte Carlo methods, every iteration is independent from previous iterations and drawing samples is performed blindly. Importance sampling \cite{glynn1989importance} and rejection sampling \cite{casella2004generalized,bishop2006pattern,robert2013monte} are examples of simple Monte Carlo methods. 
In MCMC \cite{murray2007advances}, however, every iteration is dependent on its previous iteration because they have the memory of Markov property \cite{koller2009probabilistic}. 
Some examples of MCMC are Metropolis algorithm \cite{metropolis1953equation}, Metropolis-Hastings algorithm \cite{hastings1970monte}, Gibbs sampling \cite{geman1984stochastic}, and slice sampling \cite{neal2003slice,skilling2003slice}. The Metropolis algorithms are usually slow because of their random walk behaviour \cite{spitzer2013principles}. Some efficient methods, for faster exploration of range of data by sampling methods, are Hamiltonian (or hybrid) Monte Carlo method \cite{duane1987hybrid}, Adler's overrelaxation \cite{adler1981over}, and ordered overrelaxation \cite{neal1998suppressing}. 
Monte Carlo methods have been originally developed in computational physics \cite{newman2013computational}; hence, they have application in physics \cite{binder2012monte}. They also have application in other fields such as finance \cite{glasserman2013monte} and reinforcement learning \cite{barto1994monte,wang2012monte,sutton2018reinforcement}. 

In this tutorial and literature review paper, we cover both areas of sampling, i.e., survey sampling and sampling from distributions using Monte Carlo methods. The remainder of this paper is organized as follows. Section \ref{section_background} reviews some required background on mean squared error, variance, bias, estimations using maximum likelihood estimation, Bernoulli, Binomial, and Hypergeometric distributions, the Horvitz–Thompson estimator, and the Markov property. We introduce, in detail, the methods of survey sampling and Monte Carlo methods in Sections \ref{section_survey_sampling} and \ref{section_Monte_Carlo}, respectively. Finally, we provide a summary of methods, their pros and cons, and conclusions in Section \ref{section_conclusion}.

\section{Background}\label{section_background}

\subsection{Mean Squared Error, Variance, and Bias}

The materials of this subsection are taken from our previous tutorial paper \cite{ghojogh2019theory}. 
Assume we have variable $X$ and we estimate it. Let the random variable $\widehat{X}$ denote the estimate of $X$. Let $\mathbb{E}(\cdot)$ and $\mathbb{P}(\cdot)$ denote expectation and probability, respectively. 
The \textit{variance} of estimating this random variable is defined as:
\begin{align}\label{equation_variance}
\mathbb{V}\text{ar}(\widehat{X}) := \mathbb{E}\big((\widehat{X} - \mathbb{E}(\widehat{X}))^2\big),
\end{align}
which means average deviation of $\widehat{X}$ from the mean of our estimate, $\mathbb{E}(\widehat{X})$, where the deviation is squared for symmetry of difference.
This variance can be restated as:
\begin{align}
\mathbb{V}\text{ar}(\widehat{X}) &= \mathbb{E}(\widehat{X}^2) - (\mathbb{E}(\widehat{X}))^2. \label{equation_variance_2}
\end{align}
See Appendix \ref{section_appendix_background} for proof.

Our estimation can have a bias. The \textit{bias} of our estimate is defined as:
\begin{align}\label{equation_bias}
\mathbb{B}\text{ias}(\widehat{X}) := \mathbb{E}(\widehat{X}) - X,
\end{align}
which means how much the mean of our estimate deviates from the original $X$.

\begin{definition}[Unbiased Estimator]\label{definition_unbiased_estimator}
If the bias of an estimator is zero, i.e., $\mathbb{E}(\widehat{X}) = X$, the estimator is \textbf{unbiased}. 
\end{definition}

The \textit{Mean Squared Error (MSE)} of our estimate, $\widehat{X}$, is defined as:
\begin{align}\label{equation_MSE}
\text{MSE}(\widehat{X}) := \mathbb{E}\big((\widehat{X} - X)^2\big),
\end{align}
which means how much our estimate deviates from the original $X$.


The relation of MSE, variance, and bias is as follows:
\begin{align}
&\text{MSE}(\widehat{X}) = \mathbb{V}\text{ar}(\widehat{X}) + (\mathbb{B}\text{ias}(\widehat{X}))^2. \label{equation_relation_MSE_variance_bias}
\end{align}
See Appendix \ref{section_appendix_background} for proof.

If we have two random variables $\widehat{X}$ and $\widehat{Y}$, we can say:
\begin{align}
&\mathbb{V}\text{ar}(a\widehat{X} + b\widehat{Y}) \nonumber \\
&~~~~~~~~= a^2\, \mathbb{V}\text{ar}(\widehat{X}) + b^2\, \mathbb{V}\text{ar}(\widehat{X}) + 2ab\, \mathbb{C}\text{ov}(\widehat{X},\widehat{Y}), \label{equation_variance_of_two_variables}
\end{align}
where $\mathbb{C}\text{ov}(\widehat{X},\widehat{Y})$ is \textit{covariance} defined as:
\begin{align}\label{equation_covariance}
\mathbb{C}\text{ov}(\widehat{X},\widehat{Y}) := \mathbb{E}(\widehat{X}\widehat{Y}) - \mathbb{E}(\widehat{Y})\,\mathbb{E}(\widehat{Y}).
\end{align}
See Appendix \ref{section_appendix_background} for proof.

If the two random variables are independent, i.e., $X \indep Y$, we have:
\begin{align}
&\mathbb{E}(\widehat{X}\widehat{Y}) = \mathbb{E}(\widehat{X})\, \mathbb{E}(\widehat{Y}) \implies \mathbb{C}\text{ov}(\widehat{X},\widehat{Y}) = 0, \label{equation_expectation_independent}
\end{align}
See Appendix \ref{section_appendix_background} for proof.
Note that Eq. (\ref{equation_expectation_independent}) is not true for the reverse implication (we can prove by counterexample). 

We can extend Eqs. (\ref{equation_variance_of_two_variables}) and (\ref{equation_covariance}) to multiple random variables:
\begin{align}
&\mathbb{V}\text{ar}\Big(\sum_{i=1}^k a_i X_i \Big) \nonumber \\
&~~~~~~~ = \sum_{i=1}^k a_i^2\, \mathbb{V}\text{ar}(X_i) + \sum_{i=1}^k \sum_{j=1, j\neq i}^k a_i a_j \mathbb{C}\text{ov}(X_i, X_j), \label{equation_variance_multiple} \\
&\mathbb{C}\text{ov}\Big( \sum_{i=1}^{k_1} a_i X_i, \sum_{j=1}^{k_2} b_j Y_j \Big) = \sum_{i=1}^{k_1} \sum_{j=1}^{k_2} a_i\, b_j\, \mathbb{C}\text{ov}(X_i, Y_j),
\end{align}
where $a_i$'s and $b_j$'s are not random.

According to Eq. (\ref{equation_expectation_independent}), if the random variables are independent, Eq. (\ref{equation_variance_multiple}) is simplified to:
\begin{align}
&\mathbb{V}\text{ar}\Big(\sum_{i=1}^k a_i X_i \Big) = \sum_{i=1}^k a_i^2\, \mathbb{V}\text{ar}(X_i). \label{equation_variance_multiple_independent}
\end{align}

\subsection{Estimates for Mean and Variance}

The Maximum Likelihood Estimation (MLE) or Method of Moments (MOM) for a the mean and variance of Gaussian distributions are:
\begin{align}
& \mu = \frac{1}{N} \sum_{j=1}^N x_j, \label{equation_mean_estimate}\\
& \sigma^2 = \frac{1}{N} \sum_{j=1}^N (x_j - \mu)^2, \label{equation_variance_estimate}
\end{align}
respectively. These estimates are usually used for estimating the mean and variance of any data. 

\begin{lemma}\label{lemma_variance_estimate_restate}
The estimate of variance, which is Eq. (\ref{equation_variance}), can also be restated as:
\begin{align}\label{equation_variance_estimate_2}
\sigma^2 = \frac{1}{N} \sum_{j=1}^N x_j^2 - \mu^2.
\end{align}
\end{lemma}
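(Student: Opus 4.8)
The plan is to prove this by a direct algebraic expansion of the squared term inside the sum, mirroring the population-level identity already established in Eq.~(\ref{equation_variance_2}). The only ingredient beyond elementary algebra is the definition of the sample mean $\mu$ in Eq.~(\ref{equation_mean_estimate}), which is what makes the cross term collapse.

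First I would start from the definition of the variance estimate in Eq.~(\ref{equation_variance_estimate}) and expand the binomial square $(x_j - \mu)^2 = x_j^2 - 2\mu x_j + \mu^2$ inside the summation. Next I would distribute the sum across the three resulting terms, obtaining
\begin{align}
\sigma^2 = \frac{1}{N} \sum_{j=1}^N x_j^2 - \frac{2\mu}{N} \sum_{j=1}^N x_j + \frac{1}{N} \sum_{j=1}^N \mu^2, \nonumber
\end{align}
where $\mu$ is treated as a constant with respect to the summation index $j$ because it has already been computed via Eq.~(\ref{equation_mean_estimate}).

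The key step is then to substitute $\sum_{j=1}^N x_j = N\mu$, which follows directly from rearranging the definition of the mean in Eq.~(\ref{equation_mean_estimate}). This turns the middle term into $2\mu^2$, while the final term simplifies to $\mu^2$ since $\sum_{j=1}^N \mu^2 = N\mu^2$. Combining the two constant terms yields $-2\mu^2 + \mu^2 = -\mu^2$, which gives exactly the claimed form in Eq.~(\ref{equation_variance_estimate_2}).

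There is no genuine obstacle here, as the result is a routine identity; the only point requiring care is recognizing that $\mu$ is itself the empirical mean rather than an arbitrary constant, so that the first-moment sum and the constant sum combine cleanly. This identity is simply the finite-sample analogue of the expectation-based relation $\mathbb{V}\text{ar}(\widehat{X}) = \mathbb{E}(\widehat{X}^2) - (\mathbb{E}(\widehat{X}))^2$ in Eq.~(\ref{equation_variance_2}), with sample averages playing the role of expectations.
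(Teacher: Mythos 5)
Your proposal is correct and follows essentially the same route as the paper's own proof in Appendix~\ref{section_appendix_background}: expand $(x_j-\mu)^2$, distribute the sum, use $\sum_{j=1}^N x_j = N\mu$ from Eq.~(\ref{equation_mean_estimate}), and combine the $-2\mu^2$ and $+\mu^2$ terms. No differences worth noting.
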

\begin{proof}
See Appendix \ref{section_appendix_background} for proof.
\end{proof}

\begin{lemma}
The variance of the estimate of mean is:
\begin{align}\label{equation_variance_of_mean}
\mathbb{V}\text{ar}(\mu) = \frac{1}{N}\, \sigma^2.
\end{align}
\end{lemma}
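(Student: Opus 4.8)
The plan is to treat the sample mean estimate $\mu = \frac{1}{N}\sum_{j=1}^N x_j$ as a random variable, where the randomness comes from regarding each sampled point $x_j$ as a random variable drawn from the underlying distribution. The conceptual first step is to recognize $\mu$ as a \emph{linear combination} of the $x_j$ with constant coefficients $a_j = 1/N$, so that the multivariate variance identity in Eq.~(\ref{equation_variance_multiple}) (and its independent simplification in Eq.~(\ref{equation_variance_multiple_independent})) becomes directly applicable.

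Next I would invoke the standard assumption that the samples $x_1, \dots, x_N$ are independent and identically distributed, each with variance $\mathbb{V}\text{ar}(x_j) = \sigma^2$. Independence lets me discard all covariance cross-terms via Eq.~(\ref{equation_expectation_independent}), reducing the general formula to Eq.~(\ref{equation_variance_multiple_independent}). Substituting $a_j = 1/N$ then gives
\begin{align}
\mathbb{V}\text{ar}(\mu) = \mathbb{V}\text{ar}\Big(\tfrac{1}{N}\textstyle\sum_{j=1}^N x_j\Big) = \sum_{j=1}^N \frac{1}{N^2}\, \mathbb{V}\text{ar}(x_j).
\end{align}
Using the identically-distributed assumption, each term equals $\sigma^2/N^2$, and summing the $N$ identical terms yields $N \cdot \sigma^2/N^2 = \sigma^2/N$, which is exactly the claimed identity in Eq.~(\ref{equation_variance_of_mean}).

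The main obstacle here is not computational but conceptual and assumptional: the routine algebra is trivial, but the result hinges entirely on the independence of the draws, which is what collapses the double sum of Eq.~(\ref{equation_variance_multiple}) down to a single sum. I would therefore make the i.i.d. hypothesis explicit at the outset, since without independence the covariance terms $\mathbb{C}\text{ov}(x_i, x_j)$ would survive and the clean $\sigma^2/N$ scaling would fail. This is precisely the point relevant to survey sampling, where sampling \emph{without} replacement introduces dependence and modifies this variance by a finite-population correction factor.
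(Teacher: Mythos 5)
Your proof is correct and follows essentially the same route as the paper's: write $\mu = \frac{1}{N}\sum_{j=1}^N x_j$, apply Eq.~(\ref{equation_variance_multiple_independent}) under the i.i.d.\ assumption, and sum the $N$ identical terms $\sigma^2/N^2$. Your explicit flagging of the independence hypothesis (and its failure under sampling without replacement) is a useful addition but does not change the argument.
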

\begin{proof}
See Appendix \ref{section_appendix_background} for proof.
\end{proof}

\begin{proposition}\label{proposition_variance_unbiased}
An \textbf{unbiased} estimator for variance is:
\begin{align}\label{equation_variance_unbiased}
\sigma^2 = \frac{1}{N-1} \sum_{j=1}^N (x_j - \mu)^2.
\end{align}
\end{proposition}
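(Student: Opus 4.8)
The plan is to show that the expectation of the proposed estimator equals the true population variance, so that by Definition \ref{definition_unbiased_estimator} its bias vanishes. Write $m := \mathbb{E}(x_j)$ and $v := \mathbb{V}\text{ar}(x_j)$ for the true mean and variance of the i.i.d.\ samples, and recall that $\mu = \frac{1}{N}\sum_{j=1}^N x_j$ denotes the sample mean. The quantity whose expectation must be computed is $\frac{1}{N-1}\sum_{j=1}^N (x_j - \mu)^2$.

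First I would invoke Lemma \ref{lemma_variance_estimate_restate}, multiplied through by $N$, to rewrite the sum of squared deviations in the algebraically convenient form
\begin{align}
\sum_{j=1}^N (x_j - \mu)^2 = \sum_{j=1}^N x_j^2 - N\mu^2. \nonumber
\end{align}
This separates the dependence on the raw second moments from the dependence on the sample mean, and it is the step that makes the expectation tractable.

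Next I would take expectations of each piece. For the first sum, linearity and identical distribution give $\mathbb{E}\big(\sum_j x_j^2\big) = N\,\mathbb{E}(x_1^2)$, and Eq.\ (\ref{equation_variance_2}) rewrites $\mathbb{E}(x_1^2) = v + m^2$. For the second term I would use Eq.\ (\ref{equation_variance_of_mean}), which states $\mathbb{V}\text{ar}(\mu) = v/N$ under independence; combined once more with Eq.\ (\ref{equation_variance_2}) applied to $\mu$, this yields $\mathbb{E}(\mu^2) = v/N + m^2$. Substituting both back gives $\mathbb{E}\big(\sum_j (x_j-\mu)^2\big) = N(v+m^2) - N(v/N + m^2) = (N-1)\,v$.

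The conclusion then follows immediately: dividing by $N-1$ shows that the expectation of the estimator in Eq.\ (\ref{equation_variance_unbiased}) equals $v$, i.e.\ it is unbiased. The main subtlety — and the reason the factor $1/(N-1)$ appears rather than $1/N$ — is that the sample mean $\mu$ is itself estimated from the same data and is correlated with each $x_j$; this is exactly what produces the ``missing'' $v/N$ term in $\mathbb{E}(\mu^2)$ and hence the loss of one degree of freedom relative to the naive estimator of Eq.\ (\ref{equation_variance_estimate}). I would emphasize that the independence of the samples enters precisely once, in invoking Eq.\ (\ref{equation_variance_of_mean}); without it the cross-covariance terms of Eq.\ (\ref{equation_variance_multiple}) would have to be carried along and the clean factor $N-1$ would no longer emerge.
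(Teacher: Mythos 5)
Your proof is correct and follows essentially the same route as the paper's: rewrite $\sum_j (x_j-\mu)^2 = \sum_j x_j^2 - N\mu^2$ via Lemma \ref{lemma_variance_estimate_restate}, then compute $\mathbb{E}(x_j^2)$ and $\mathbb{E}(\mu^2)$ from Eqs.\ (\ref{equation_variance_2}) and (\ref{equation_variance_of_mean}) to obtain $(N-1)\sigma^2$. Your explicit separation of the true moments $m, v$ from the sample quantities is a helpful clarification of the paper's notation, but the argument is the same.
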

\begin{proof}
See Appendix \ref{section_appendix_background} for proof.
\end{proof}

Note that Eq. (\ref{equation_variance_estimate}) is a biased estimate of variance because its expectation is:
\begin{align*}
\mathbb{E}(\sigma^2) &= \frac{1}{N} \mathbb{E}\Big(\sum_{j=1}^N (x_j - \mu)^2\Big) = \frac{1}{N} (N-1)\, \sigma^2.
\end{align*}

\subsection{Bernoulli, Binomial, and Hypergeometric Distributions}

Bernoulli distribution is a discrete distribution of being one and zero with probabilities $p$ and $1-p$, respectively. Its expected value and variance are:
\begin{align}
&\mathbb{E}(X) = p, \label{equation_Bernoulli_mean}\\
&\mathbb{V}\text{ar}(X) = p\,(1-p), \label{equation_Bernoulli_variance}
\end{align}
respectively. 

Binomial distribution is a discrete distribution for probability of success of $n$ \textit{independent} events out of $N$ events where the probability of success of every event is $p$. 
As the drawn events are independent, binomial distribution can be seen like sampling \textit{with replacement}. 
The Probability Mass Function (PMF) of binomial distribution is:
\begin{align}\label{equation_binomial_PMF}
f(n) = \binom{N}{n} p^{n} (1-p)^{N-n}.
\end{align}

Hypergeometric distribution is a discrete distribution for probability of $k$ successes in $n$ draws, \textit{without replacement}, out of $N$ events where $K$ success actually exist in the $N$ events. 
Binomial distribution can be seen like sampling \textit{without replacement}. 
The PMF of hypergeometric distribution is:
\begin{align}\label{equation_hypergeometric_PMF}
f(k) = \frac{\binom{K}{k} \binom{N-K}{n-k}}{\binom{N}{n}}.
\end{align}

\subsection{The Horvitz–Thompson Estimator}

Consider the following estimator for the population quantity from a sample of size $N$:
\begin{align}\label{equation_estimator_population_quantity}
\theta = \sum_{j=1}^N h(x_j).
\end{align}
Some special cases of this estimator are:
\begin{align}
\text{total (sum): } &h(x_j) = x_j, \\
\text{mean (average): } &h(x_j) = \frac{x_j}{N}, \label{equation_population_quantity_mean} \\
\text{proportion (of set $\mathcal{S}$): } &h(x_j) = \frac{\mathbb{I}(x_j \in \mathcal{S})}{N},
\end{align}
where $\mathbb{I}(x_j \in \mathcal{S}) = \mathbb{I}_j$ denotes the indicator function which is zero or one if $x_j$ does not belong or belongs to the set $\mathcal{S}$, respectively. 

However, there is an estimator named the Horvitz–Thompson (HT) estimator \cite{horvitz1952generalization}, defined as:
\begin{align}\label{equation_HT_estimator}
\widehat{\theta}_\text{HT} := \sum_{j \in \mathcal{S}} \frac{h(x_j)}{\pi_j},
\end{align}
where $\pi_j := \mathbb{P}(j \in \mathcal{S})$. 
The HT estimator can also be used to estimate the population quantity of data \cite{little2019statistical}. 

\begin{definition}[Inverse Probability Weighting]
In inverse probability weighting, if an individual has large/small probability of being included, we deflate/inflate its value. This technique, which is common for deriving estimators, reduces the bias of unweighted estimator \cite{robins1994estimation}. 
\end{definition}

The Eq. (\ref{equation_HT_estimator}) shows that the HT estimator uses the inverse probability weighting because of having probability of inclusion, $\pi_j$, in the denominator. Hence, its bias is reduced; actually it is unbiased. 

\begin{proposition}\label{proposition_HT_unbiased}
The HT estimator is an \textbf{unbiased} estimator for the population quantity. 
\end{proposition}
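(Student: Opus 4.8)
The plan is to show directly that $\mathbb{E}(\widehat{\theta}_\text{HT}) = \theta$, which by Definition \ref{definition_unbiased_estimator} establishes unbiasedness. The population quantity is $\theta = \sum_{j=1}^N h(x_j)$ as in Eq. (\ref{equation_estimator_population_quantity}), so the goal is to compute the expectation of the HT estimator from Eq. (\ref{equation_HT_estimator}) and match it to this fixed sum.

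The first step, which is the conceptual heart of the argument, is to rewrite the data-dependent sum $\sum_{j \in \mathcal{S}}$ (whose index set $\mathcal{S}$ is itself random) as a fixed sum over all $N$ population units weighted by inclusion indicators. Using the indicator $\mathbb{I}_j = \mathbb{I}(x_j \in \mathcal{S})$ introduced above, I would write
\begin{align}
\widehat{\theta}_\text{HT} = \sum_{j=1}^N \frac{h(x_j)}{\pi_j}\, \mathbb{I}_j.
\end{align}
This is exact, not an approximation: the factor $\mathbb{I}_j$ simply switches each term on or off according to whether unit $j$ is drawn into the sample.

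Next I would take expectations. Treating the population values $h(x_j)$ and the inclusion probabilities $\pi_j$ as fixed (deterministic) quantities, the only randomness resides in the indicators $\mathbb{I}_j$, so by linearity of expectation
\begin{align}
\mathbb{E}(\widehat{\theta}_\text{HT}) = \sum_{j=1}^N \frac{h(x_j)}{\pi_j}\, \mathbb{E}(\mathbb{I}_j).
\end{align}
Each $\mathbb{I}_j$ is a Bernoulli random variable equal to $1$ with probability $\pi_j = \mathbb{P}(j \in \mathcal{S})$, so by the Bernoulli mean in Eq. (\ref{equation_Bernoulli_mean}) we have $\mathbb{E}(\mathbb{I}_j) = \pi_j$. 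Substituting this cancels $\pi_j$ against the denominator, leaving $\sum_{j=1}^N h(x_j) = \theta$, as required.

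I do not expect a serious obstacle; the argument is short once the sum is re-indexed. The one point demanding care is the re-indexing step itself: one must be clear that $\mathcal{S}$ is random and that the inverse-probability weight $1/\pi_j$ is precisely what compensates for the inclusion probability of each unit, so that the design expectation of the weighted indicator returns the deterministic contribution $h(x_j)$. The only implicit assumption is that every $\pi_j > 0$ (otherwise a unit with $\pi_j = 0$ could never be sampled and $1/\pi_j$ is undefined), and I would flag this positivity condition in passing.
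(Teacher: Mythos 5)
Your proof is correct and follows exactly the same route as the paper's: rewrite $\widehat{\theta}_\text{HT}$ as $\sum_{j=1}^N \frac{h(x_j)}{\pi_j}\mathbb{I}_j$, apply linearity of expectation, and use $\mathbb{E}(\mathbb{I}_j)=\pi_j$ to cancel the weights. Your added remark that every $\pi_j$ must be strictly positive is a sensible precision the paper leaves implicit.
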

\begin{proof}
See Appendix \ref{section_appendix_background} for proof.
\end{proof}

\subsection{The Markov Property and Markov Chain}\label{section_Markov_property}

This subsection is taken from our previous tutorial on hidden Markov model \cite{ghojogh2019hidden}. 
Consider a times series of random variables $X_1, X_2, \dots, X_n$.
In general, the joint probability of these random variables can be written as:
\begin{align}
\mathbb{P}(X_1, &X_2, \dots, X_n) = \mathbb{P}(X_1)\, \mathbb{P}(X_2 \,|\, X_1)\, \nonumber \\
&\mathbb{P}(X_3 \,|\, X_2, X_1) \dots \mathbb{P}(X_n \,|\, X_{n-1}, \dots, X_2, X_1),
\end{align}
according to chain (or multiplication) rule in probability. 
[The first order] \textit{Markov property} is an assumption which states that in a time series of random variables $X_1, X_2, \dots, X_n$, every random variable is merely dependent on the latest previous random variable and not the others. In other words:
\begin{align}
\mathbb{P}(X_i \,|\, X_{i-1}, X_{i-2}, \dots, X_{2}, X_{1}) = \mathbb{P}(X_i \,|\, X_{i-1}).
\end{align}
Hence, with Markov property, the chain rule is simplied to:
\begin{align}
\mathbb{P}&(X_1, X_2, \dots, X_n) \nonumber \\
&= \mathbb{P}(X_1)\, \mathbb{P}(X_2 \,|\, X_1)\, \mathbb{P}(X_3 \,|\, X_2) \dots \mathbb{P}(X_n \,|\, X_{n-1}).
\end{align}
The Markov property can be of any order. For example, in a second order Markov property, a random variable is dependent on the latest and one-to-latest variables. Usually, the default Markov property is of order one. 
A stochastic process which has the Markov process is called a Markovian process (or Markov process). 

A \textit{Markov chain} is a probabilistic graphical model \cite{koller2009probabilistic} which has Markov property. The Markov chain can be either directed or undirected. Usually, Markov chain is a Bayesian network where the edges are directed. It is important not to confuse Markov chain with Markov network. 
For more information on Markov property and Markov chains, refer to \cite{ghojogh2019hidden}. 

\begin{figure*}[!t]
\centering
\includegraphics[width=5in]{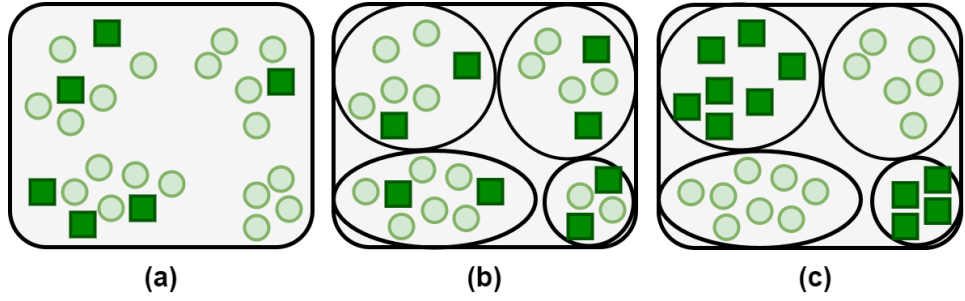}
\caption{An example of survey sampling where the bold squares denote the drawn samples and the circle sets show the strata or clusters of data: (a) SRS, (b) stratified sampling, and (c) cluster sampling.}
\label{figure_survey_sampling}
\end{figure*}

\section{Survey Sampling}\label{section_survey_sampling}

\subsection{Simple Random Sampling}

\begin{definition}[Simple Random Sampling]\label{definition_SRS}
Simple Random Sampling (SRS) is drawing a set $\mathcal{S}$ of size $n < N$ from the set of data $\mathcal{D}$ \textbf{without} replacement \cite{barnett1974elements}. In SRS, all items have the same probability of being chosen \cite{yates2002practice}. An illustration of SRS is shown in Fig. \ref{figure_survey_sampling}.
\end{definition}

\begin{corollary}
According to Definition \ref{definition_SRS}, there is no repetitive item in a sample drawn by SRS, if there is no repetitive item in the set $\mathcal{D}$. 
\end{corollary}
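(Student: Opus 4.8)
The plan is to argue directly from the definition of sampling without replacement, carefully separating the notion of a repeated \emph{position} (index) from that of a repeated \emph{value}. First I would make precise what ``without replacement'' in Definition~\ref{definition_SRS} means at the level of indices: a draw of size $n$ selects indices $j_1, j_2, \dots, j_n \in \{1, \dots, N\}$ such that once an index has been chosen it is removed from the pool of candidates and cannot be chosen again. Consequently the selected indices are pairwise distinct, i.e. $j_a \neq j_b$ whenever $a \neq b$.

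Next I would pass from distinct indices to distinct values using the hypothesis on $\mathcal{D}$. The subtlety to highlight --- and the reason the hypothesis ``no repetitive item in $\mathcal{D}$'' is not vacuous --- is the observation made in the introduction that $\mathcal{D}$ may a priori contain repeated numbers or vectors, so sampling without replacement alone guarantees only distinct positions, not distinct values. I would therefore argue by contradiction: suppose the sample $\mathcal{S} = \{x_{j_1}, \dots, x_{j_n}\}$ contained a repeated item, say $x_{j_a} = x_{j_b}$ with $a \neq b$. Since $j_a \neq j_b$ by the first step, this would exhibit two distinct indices of $\mathcal{D}$ carrying the same value, contradicting the assumption that $\mathcal{D}$ has no repetitive item. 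Hence no such pair exists and $\mathcal{S}$ is free of repetitions.

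I do not expect a genuine obstacle here, since the statement is essentially definitional; the only point requiring care is the clean separation between index-level and value-level duplication, so that the role of each hypothesis (without replacement $\Rightarrow$ distinct indices; distinctness of $\mathcal{D}$ $\Rightarrow$ distinct values) is made explicit rather than conflated.
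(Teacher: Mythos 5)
Your argument is correct and is exactly the reasoning the paper relies on: the corollary is stated without an explicit proof because it follows immediately from Definition~\ref{definition_SRS}, via the same two observations you make (without replacement $\Rightarrow$ distinct indices; no duplicates in $\mathcal{D}$ $\Rightarrow$ distinct indices carry distinct values). Your version merely makes the index/value distinction explicit, which is a faithful elaboration rather than a different route.
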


Using $h(x_j)$ for mean (see Eq. (\ref{equation_population_quantity_mean})) in the HT estimator (see Eq. (\ref{equation_HT_estimator})) can give us an estimator for the mean. According to the hypergeometric distribution, introduced Eq. (\ref{equation_hypergeometric_PMF}), we have in SRS (which is sampling without replacement): 
\begin{align}
\pi_j = \mathbb{P}(j \in \mathcal{S}) = \frac{\binom{1}{1} \binom{N-1}{n-1}}{\binom{N}{n}} = \frac{n}{N}.
\end{align}
Hence, the mean of sample is:
\begin{align}\label{equation_mean_SRS}
\widehat{\mu} = \sum_{j \in \mathcal{S}} \frac{x_j / N}{n / N} = \frac{1}{n} \sum_{j \in \mathcal{S}} x_j = \frac{1}{n} \sum_{j=1}^N x_j\, \mathbb{I}_j.
\end{align}
Compare this with the mean of whole data which is Eq. (\ref{equation_mean_estimate}). 
Similar to Eq. (\ref{equation_variance_unbiased}), the variance of sample in SRS is calculated as:
\begin{align}\label{equation_SRS_variance_estimate}
\widehat{\sigma}^2 = \frac{1}{n-1} \sum_{j \in \mathcal{S}} (x_j - \widehat{\mu})^2.
\end{align}

\begin{proposition}\label{proposition_expectation_variance_of_mean_SRS}
The expectation and variance of the mean of sample by SRS, i.e. Eq. (\ref{equation_mean_SRS}), are \cite{barnett1974elements}:
\begin{align}
&\mathbb{E}(\widehat{\mu}) = \mu, \label{equation_SRS_mean_expectation} \\
&\mathbb{V}\text{ar}(\widehat{\mu}) = \big(1 - \frac{n}{N}\big)\, \frac{\sigma^2}{n}, \label{equation_SRS_mean_variance}
\end{align}
respectively, where $\mu$ and $\sigma^2$ are the mean and variance of whole data, defined by Eqs. (\ref{equation_mean_estimate}) and (\ref{equation_variance_unbiased}), respectively.  
\end{proposition}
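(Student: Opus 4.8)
The plan is to work with the sample mean in its indicator form, $\widehat{\mu} = \frac{1}{n}\sum_{j=1}^N x_j\,\mathbb{I}_j$ as given in Eq. (\ref{equation_mean_SRS}), and to exploit the fact that each inclusion indicator $\mathbb{I}_j$ is a Bernoulli random variable with success probability $\pi_j = n/N$. The expectation and variance of $\widehat{\mu}$ then reduce to first and second moments of these indicators.

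For the expectation claim (Eq. (\ref{equation_SRS_mean_expectation})) I would simply invoke linearity: $\mathbb{E}(\widehat{\mu}) = \frac{1}{n}\sum_{j=1}^N x_j\,\mathbb{E}(\mathbb{I}_j) = \frac{1}{n}\sum_{j=1}^N x_j\,(n/N) = \mu$. Equivalently, since $\widehat{\mu}$ is exactly the Horvitz--Thompson estimator instantiated with $h(x_j) = x_j/N$ and $\pi_j = n/N$, unbiasedness follows immediately from Proposition \ref{proposition_HT_unbiased}.

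The real work lies in the variance (Eq. (\ref{equation_SRS_mean_variance})). Pulling out $1/n^2$ and applying the variance-of-a-sum identity (Eq. (\ref{equation_variance_multiple})), I obtain a diagonal term $\sum_{j} x_j^2\,\mathbb{V}\text{ar}(\mathbb{I}_j)$ plus off-diagonal covariance terms $\sum_{i}\sum_{j\neq i} x_i x_j\,\mathbb{C}\text{ov}(\mathbb{I}_i,\mathbb{I}_j)$. The diagonal variances follow from the Bernoulli formula (Eq. (\ref{equation_Bernoulli_variance})), giving $\mathbb{V}\text{ar}(\mathbb{I}_j) = (n/N)(1-n/N)$. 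The covariances require the joint inclusion probability $\pi_{ij} = \mathbb{P}(i,j\in\mathcal{S})$, which a hypergeometric count in the spirit of Eq. (\ref{equation_hypergeometric_PMF}) gives as $\pi_{ij} = \binom{N-2}{n-2}/\binom{N}{n} = n(n-1)/[N(N-1)]$; hence $\mathbb{C}\text{ov}(\mathbb{I}_i,\mathbb{I}_j) = \pi_{ij} - \pi_i\pi_j = -n(N-n)/[N^2(N-1)]$.

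I expect the main obstacle to be this covariance bookkeeping: the indicators are \emph{negatively} correlated precisely because SRS is without replacement, so the off-diagonal terms do not vanish (in contrast to the independent case of Eq. (\ref{equation_variance_multiple_independent})). After substituting both moments, I would factor out the common factor $n(N-n)/N^2$ and use the identity $\sum_{i}\sum_{j\neq i} x_i x_j = \big(\sum_{j} x_j\big)^2 - \sum_{j} x_j^2 = N^2\mu^2 - \sum_{j} x_j^2$ to collapse the double sum. Finally, recognizing that $\sum_{j} x_j^2 - N\mu^2 = (N-1)\,\sigma^2$ with $\sigma^2$ the unbiased variance of Eq. (\ref{equation_variance_unbiased}) converts the bracketed expression into $N\sigma^2$; dividing by $n^2$ then yields $\mathbb{V}\text{ar}(\widehat{\mu}) = \big(1-n/N\big)\,\sigma^2/n$, as claimed. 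The appearance of the unbiased (dividing by $N-1$) rather than the MLE variance is exactly what makes the final cancellation clean, so keeping track of which variance estimator is used will be the key detail to handle carefully.
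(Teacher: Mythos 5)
Your proposal is correct and follows essentially the same route as the paper's proof: linearity of expectation with $\mathbb{E}(\mathbb{I}_j)=n/N$ for unbiasedness, then the variance-of-a-sum decomposition with Bernoulli variances on the diagonal and the joint inclusion probability $\pi_{j\ell}=\frac{n}{N}\frac{n-1}{N-1}$ driving the negative covariances. The only cosmetic difference is that the paper packages the final algebra as a small lemma rewriting $\sigma^2$ in terms of $\sum_j x_j^2$ and $\sum_{j\neq\ell}x_j x_\ell$, whereas you collapse the double sum via $\big(\sum_j x_j\big)^2$; these are the same identities in different order.
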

\begin{proof}
See Appendix \ref{section_appendix_survey_sampling} for proof.
\end{proof}

\begin{corollary}\label{corollary_SRS_varianceOfMean_proportional}
The variance of mean of sample by SRS can be restated as:
\begin{align}\label{equation_SRS_varianceOfMean_proportional}
\mathbb{V}\text{ar}(\widehat{\mu}) = \frac{1}{n} (1 - \frac{n}{N}) &\bigg[ \sum_{k=1}^K \frac{N_k - 1}{N - 1} \sigma_k^2 \nonumber \\
&+ \sum_{k=1}^K \frac{N_k}{N-1} (\mu_k - \mu)^2 \bigg].
\end{align}
\end{corollary}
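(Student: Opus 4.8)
The plan is to recognize that Corollary \ref{corollary_SRS_varianceOfMean_proportional} is simply the SRS variance of Proposition \ref{proposition_expectation_variance_of_mean_SRS} with the total variance $\sigma^2$ rewritten through an analysis-of-variance (within/between strata) decomposition. Since Eq. (\ref{equation_SRS_mean_variance}) already gives $\mathbb{V}\text{ar}(\widehat{\mu}) = (1 - n/N)\,\sigma^2/n$, it suffices to prove the identity
\begin{align}
\sigma^2 = \sum_{k=1}^K \frac{N_k - 1}{N-1}\,\sigma_k^2 + \sum_{k=1}^K \frac{N_k}{N-1}\,(\mu_k - \mu)^2, \nonumber
\end{align}
and then multiply both sides by the factor $\frac{1}{n}(1 - n/N)$.

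First I would set up the stratification bookkeeping: partition the population $\mathcal{D}$ into $K$ disjoint strata of sizes $N_1,\dots,N_K$ with $\sum_{k=1}^K N_k = N$, and write $\mu_k$ and $\sigma_k^2$ for the mean and the unbiased variance of stratum $k$ (so $\sigma^2$ and each $\sigma_k^2$ use the same $N-1$ and $N_k - 1$ denominators as in Eq. (\ref{equation_variance_unbiased})). Under this convention the total sum of squares is $(N-1)\,\sigma^2 = \sum_{j=1}^N (x_j - \mu)^2$, which I would split by grouping the summands according to which stratum contains $x_j$.

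The core step is the within-stratum expansion. Inside stratum $k$ I would write $(x_j - \mu) = (x_j - \mu_k) + (\mu_k - \mu)$ and expand the square, producing three pieces: $\sum_{j \in k}(x_j - \mu_k)^2 = (N_k - 1)\sigma_k^2$, a cross term $2(\mu_k - \mu)\sum_{j \in k}(x_j - \mu_k)$, and the constant contribution $N_k(\mu_k - \mu)^2$. The cross term vanishes identically because $\mu_k$ is the mean of stratum $k$, so $\sum_{j \in k}(x_j - \mu_k) = 0$. Summing the surviving pieces over $k$ yields
\begin{align}
(N-1)\sigma^2 = \sum_{k=1}^K (N_k - 1)\sigma_k^2 + \sum_{k=1}^K N_k(\mu_k - \mu)^2, \nonumber
\end{align}
and dividing by $N-1$ produces the bracketed identity, which when scaled by $\frac{1}{n}(1 - n/N)$ completes the corollary.

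The main subtlety — rather than a genuine obstacle — is tracking the two different weights $(N_k - 1)/(N-1)$ and $N_k/(N-1)$. These arise precisely because the corollary adopts the \emph{unbiased} variance convention (Eq. (\ref{equation_variance_unbiased})) for both the global $\sigma^2$ and each within-stratum $\sigma_k^2$: the $N_k - 1$ on the within term comes from $\sum_{j\in k}(x_j - \mu_k)^2 = (N_k - 1)\sigma_k^2$, whereas the between term keeps the full count $N_k$ because it multiplies the constant $(\mu_k - \mu)^2$. Had the biased (MLE) convention of Eq. (\ref{equation_variance_estimate}) been used instead, the weights would collapse to $N_k/N$ throughout, so the only care needed is to stay consistent about which variance estimator is in force.
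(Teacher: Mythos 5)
Your proposal is correct and follows essentially the same route as the paper: substitute the within/between-strata decomposition of $\sigma^2$ (the paper's Lemma \ref{lemma_mean_variance_with_strata}, Eq. (\ref{equation_variance_with_strata})) into the SRS variance formula Eq. (\ref{equation_SRS_mean_variance}). Your re-derivation of that decomposition matches the paper's proof of the lemma, and your justification for the vanishing cross term (the deviations from each stratum mean sum to zero) is in fact cleaner than the paper's appeal to independence of disjoint strata.
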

\begin{proof}
See Appendix \ref{section_appendix_survey_sampling} for proof.
\end{proof}

\subsection{Bootstrapping}

\begin{definition}[Bootstrapping]
Bootstrapping \cite{efron1994introduction} is another name for simple random sampling \textbf{with} replacement \cite{pathak1962simple}. 
\end{definition}

Bootstrapping has been used in different statistical techniques such as inference \cite{mooney1993bootstrapping} and Bootstrap AGGregatING (bagging) \cite{breiman1996bagging} for model averaging \cite{hoeting1999bayesian,ghojogh2019theory}. It is noteworthy that, as bootstrapping is sampling with replacement, its probability distribution follows the binomial distribution introduced by Eq. (\ref{equation_binomial_PMF}). 

\subsection{Stratified Sampling}

\begin{definition}\label{definition_strata}
We define $\{\mathcal{D}_k\}_{k=1}^K$ to be strata (plural of stratum) of the dataset $\mathcal{D}$. 
The dataset is divided into $K$ disjoint strata:
\begin{align}\label{equation_disjoint_sets}
\mathcal{D} = \mathcal{D}_1 \oplus \mathcal{D}_2 \oplus  \cdots \oplus \mathcal{D}_K,
\end{align}
where:
\begin{align}
& \mathcal{D}_i \cap \mathcal{D}_j = \varnothing, ~~~ \forall i,j \in \{1, \dots, K\}, ~ i \neq j, \text{ and } \\
&\bigcup_{i=1}^K \mathcal{D}_i = \mathcal{D}.
\end{align}
As the strata are disjoint, they are independent. 
Note that strata may also be referred to as clusters or classes. 
\end{definition}

\begin{definition}[Stratified Sampling]\label{definition_stratified_sampling}
Let data $\mathcal{D}$ consist of $K$ strata defined in Eq. (\ref{equation_disjoint_sets}). 
Stratified sampling is simple random sampling (see Definition \ref{definition_SRS}) of size $n_k < N_k$ within every stratum $\mathcal{D}_k$, where $N_k := |\mathcal{D}_k|$ \cite{barnett1974elements}. 
An illustration of stratified sampling is shown in Fig. \ref{figure_survey_sampling}.
\end{definition}

\begin{corollary}\label{corollary_mean_variance_with_strata}
Suppose data consist of $K$ strata. According to Eqs. (\ref{equation_mean_estimate}) and (\ref{equation_variance_unbiased}), the actual mean and variance of the $k$-th stratum are:
\begin{align}
&\mu_k = \frac{1}{N_k} \sum_{j=1}^{N_k} x_{k,j}, \label{equation_mean_unbiased_stratum} \\
&\sigma_k^2 = \frac{1}{N_k-1} \sum_{j=1}^{N_k} (x_{k,j} - \mu_k)^2, \label{equation_variance_unbiased_stratum}
\end{align}
respectively. 
\end{corollary}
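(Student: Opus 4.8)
The plan is to obtain both formulas by a direct specialization of the general mean and unbiased-variance estimators to a single stratum, viewing each stratum as a self-contained sub-population. By Definition \ref{definition_strata}, the dataset $\mathcal{D}$ is partitioned into disjoint strata $\mathcal{D}_1, \dots, \mathcal{D}_K$, so each stratum $\mathcal{D}_k$ is itself a set of $N_k := |\mathcal{D}_k|$ data points. I would enumerate these as $x_{k,1}, x_{k,2}, \dots, x_{k,N_k}$, the double index recording that $x_{k,j}$ is the $j$-th element of the $k$-th stratum. The key observation is that the estimators of Eqs. (\ref{equation_mean_estimate}) and (\ref{equation_variance_unbiased}) are defined for an arbitrary population, and therefore apply verbatim when the population is taken to be $\mathcal{D}_k$.

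First I would apply the mean estimator of Eq. (\ref{equation_mean_estimate}) with the whole dataset $\mathcal{D}$ replaced by the stratum $\mathcal{D}_k$. Substituting $N \mapsto N_k$ and $x_j \mapsto x_{k,j}$ gives $\mu_k = \frac{1}{N_k} \sum_{j=1}^{N_k} x_{k,j}$, which is the first claimed identity.

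Next I would apply the unbiased variance estimator of Eq. (\ref{equation_variance_unbiased}), again substituting $N \mapsto N_k$, $x_j \mapsto x_{k,j}$, and $\mu \mapsto \mu_k$. This yields $\sigma_k^2 = \frac{1}{N_k - 1} \sum_{j=1}^{N_k} (x_{k,j} - \mu_k)^2$, the second claimed identity. The only point requiring a little care is to retain the $N_k - 1$ normalization, so that $\sigma_k^2$ is the \emph{unbiased} stratum variance of Proposition \ref{proposition_variance_unbiased} rather than the biased $N_k$-normalized version; the unbiasedness within the stratum follows by the same expectation computation already recorded after Proposition \ref{proposition_variance_unbiased}, now restricted to $\mathcal{D}_k$.

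Since the argument is purely a relabeling of already-established identities, I do not expect a genuine obstacle. The entire content is the recognition that each stratum is a population in its own right, so that the mean and variance formulas transfer to it without modification beyond the change of index set and sample size.
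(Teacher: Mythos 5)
Your proposal is correct and matches the paper's reasoning: the paper treats this corollary as an immediate specialization of Eqs. (\ref{equation_mean_estimate}) and (\ref{equation_variance_unbiased}) to each stratum viewed as its own population, offering no separate proof, which is exactly the relabeling argument you give. Nothing is missing.
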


Using $h(x_j)$ for mean (see Eq. (\ref{equation_population_quantity_mean})) in the HT estimator (see Eq. (\ref{equation_HT_estimator})) can give us an estimator for the mean. 
Let the estimated mean within the $k$-th stratum be denoted by $\widehat{\mu}_k$. 
Moreover, let $\mathcal{D}_k = \{x_{k,1}, x_{k,2}, \dots, x_{k,N_k}\}$ and the sample by SRS within $\mathcal{D}_k$ be denoted by $\mathcal{S}_k$. In the HT estimator, we have $h(x_{k,j}) = x_{k,j} / N_k$ and $\mathbb{P}(x_{k,j} \in \mathcal{D}_k) = n_k / N_k$.
According to the HT estimator, we have:
\begin{align}
\widehat{\mu}_k &= \sum_{x_{k,j} \in \mathcal{S}_k} \frac{h(x_{k,j})}{\mathbb{P}(x_{k,j} \in \mathcal{Ss}_k)} = \sum_{x_{k,j} \in \mathcal{S}_k} \frac{x_{k,j} / N_k}{n_k / N_k} \nonumber \\
&= \sum_{x_{k,j} \in \mathcal{S}_k} \frac{x_{k,j}}{n_k},
\end{align}
which makes sense because $n_k$ instances are sampled by SRS from every stratum $\mathcal{S}_k$. 
Another way to obtain this is:
\begin{align}
\widehat{\mu}_k &= \frac{1}{n_k} \sum_{x_{k,j} \in \mathcal{S}_k} x_{k,j} = \frac{1}{n_k} \sum_{j=1}^{N_k} x_{k,j} \mathbb{I}(x_{k,j} \in \mathcal{S}_k). 
\end{align}

As we have SRS in every stratum, the estimate variance within the $k$-th stratum follows Eq. (\ref{equation_SRS_variance_estimate}) and is:
\begin{align}
\widehat{\sigma}_k^2 = \frac{1}{n_k-1} \sum_{j \in \mathcal{S}_k} (x_j - \widehat{\mu}_k)^2.
\end{align}

As we have SRS in every stratum, the expectation and variance of estimate of mean of the $k$-th stratum follow Eqs. (\ref{equation_SRS_mean_expectation}) and (\ref{equation_SRS_mean_variance}) as:
\begin{align}
&\mathbb{E}(\widehat{\mu}_k) = \mu_k, \label{equation_stratifiedSampling_mean_expectation}  \\
&\mathbb{V}\text{ar}(\widehat{\mu}_k) = \big(1 - \frac{n_k}{N_k}\big)\, \frac{\sigma_k^2}{n}, \label{equation_stratifiedSampling_mean_variance}
\end{align}
where $\mu_k$ and $\sigma_k$ are the actual mean and variance of the $k$-th stratum, respectively. 

\begin{lemma}\label{lemma_mean_variance_with_strata}
Suppose data consist of $K$ strata. The actual total mean and variance of data are:
\begin{align}
&\mu = \sum_{k=1}^K \frac{N_k}{N} \mu_k, \label{equation_mean_with_strata} \\
&\sigma^2 = \frac{1}{N-1} \Big[ \sum_{k=1}^K (N_k - 1) \sigma_k^2 + \sum_{k=1}^K N_k (\mu_k - \mu)^2 \Big], \label{equation_variance_with_strata}
\end{align}
respectively. 
\end{lemma}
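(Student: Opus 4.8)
The plan is to prove the mean identity first, which is the simpler of the two, and then tackle the variance identity via a within-between decomposition of the total sum of squares.

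For the mean, I would start from the definition in Eq. (\ref{equation_mean_estimate}) and exploit the fact that the strata partition $\mathcal{D}$ (Definition \ref{definition_strata}), so that every data point belongs to exactly one stratum and $\sum_{k=1}^K N_k = N$. This lets me split the single sum over all $N$ points into a double sum over strata:
\begin{align*}
\sum_{j=1}^N x_j = \sum_{k=1}^K \sum_{j=1}^{N_k} x_{k,j} = \sum_{k=1}^K N_k\, \mu_k,
\end{align*}
where the last equality uses the stratum-mean definition in Eq. (\ref{equation_mean_unbiased_stratum}). Dividing by $N$ immediately yields Eq. (\ref{equation_mean_with_strata}).

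For the variance, I would first rewrite the unbiased estimator of Eq. (\ref{equation_variance_unbiased}) as $(N-1)\,\sigma^2 = \sum_{j=1}^N (x_j - \mu)^2$, and again split the sum across strata using the partition. The key algebraic move is to add and subtract the stratum mean $\mu_k$ inside each squared term, writing $x_{k,j} - \mu = (x_{k,j} - \mu_k) + (\mu_k - \mu)$, and then expand the square. This produces three pieces: a within-stratum sum of squares, a cross term, and a between-stratum term. The within-stratum piece collapses to $\sum_{k=1}^K (N_k - 1)\,\sigma_k^2$ by the definition in Eq. (\ref{equation_variance_unbiased_stratum}), and the between-stratum piece gives $\sum_{k=1}^K N_k (\mu_k - \mu)^2$ since $(\mu_k - \mu)^2$ is constant across the $N_k$ indices $j$ within stratum $k$.

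The one step that actually does the work — and the only place any care is needed — is showing the cross term vanishes. For each stratum $k$ the cross term is proportional to $\sum_{j=1}^{N_k} (x_{k,j} - \mu_k)$, and this sum is exactly zero because $\mu_k$ is the arithmetic mean of the $N_k$ points in that stratum (Eq. (\ref{equation_mean_unbiased_stratum})). Once this is observed, the remaining two pieces assemble directly into the bracketed expression, and dividing by $N-1$ gives Eq. (\ref{equation_variance_with_strata}). I do not anticipate any genuine obstacle here; the result is the standard analysis-of-variance decomposition of total variability into within-group and between-group contributions, and every ingredient is already supplied by the definitions stated earlier in the excerpt.
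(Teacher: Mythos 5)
Your proof is correct and follows essentially the same route as the paper's: split the sums over strata for the mean, and for the variance insert $x_{k,j}-\mu = (x_{k,j}-\mu_k)+(\mu_k-\mu)$, expand, and identify the within, between, and cross terms. In fact your justification for the vanishing cross term (that $\sum_{j=1}^{N_k}(x_{k,j}-\mu_k)=0$ because $\mu_k$ is the arithmetic mean of stratum $k$) is more precise than the paper's appeal to ``independence'' of the disjoint strata, which is the right conclusion for the wrong stated reason.
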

\begin{proof}
See Appendix \ref{section_appendix_survey_sampling} for proof.
\end{proof}

According to Eq. (\ref{equation_mean_with_strata}), the estimate of total mean of stratified sampling is weighted by the relative sizes of strata:
\begin{align}\label{equation_mean_estimate_stratified_sampling}
\widehat{\mu} = \sum_{k=1}^K \frac{N_k}{N} \widehat{\mu}_k.
\end{align}

It is noteworthy that the first and second terms in Eq. (\ref{equation_variance_with_strata}) are the within-stratum variance and the between-stratum variance, respectively. This shows that the variance is composed of within and between-stratum variances. 
The concept of within- and between-stratum variances has been widely used in the literature of Fisher discriminant analysis \cite{ghojogh2019fisher}. 

\begin{proposition}\label{proposition_expectation_variance_of_mean_stratified_sampling}
The expectation and variance of the mean of sample by stratified sampling, i.e. Eq. (\ref{equation_mean_estimate_stratified_sampling}), are \cite{barnett1974elements}:
\begin{align}
&\mathbb{E}(\widehat{\mu}) = \mu, \label{equation_stratifiedSampling_mean_expectation_total} \\
&\mathbb{V}\text{ar}(\widehat{\mu}) = \sum_{k=1}^K \big(\frac{N_k}{N}\big)^2 \big(1 - \frac{n_k}{N_k}\big)\, \frac{\sigma_k^2}{n_k}, \label{equation_stratifiedSampling_mean_variance_total}
\end{align}
respectively, where $\mu$ and $\sigma_k^2$ are the mean of whole data and the variance of $k$-th stratum, defined by Eqs. (\ref{equation_mean_estimate}) and (\ref{equation_variance_unbiased_stratum}), respectively.  
\end{proposition}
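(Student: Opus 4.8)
The plan is to reduce both claims to the single-stratum results already in hand, exploiting the linear structure of the estimator $\widehat{\mu} = \sum_{k=1}^K \frac{N_k}{N}\widehat{\mu}_k$ in Eq. (\ref{equation_mean_estimate_stratified_sampling}) together with the disjointness of the strata. I would treat the two statements separately, handling the unbiasedness first since it is the simpler of the two.

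For the expectation, I would apply linearity of expectation and pull the nonrandom weights $N_k/N$ outside, obtaining $\mathbb{E}(\widehat{\mu}) = \sum_{k=1}^K \frac{N_k}{N}\,\mathbb{E}(\widehat{\mu}_k)$. Substituting the per-stratum unbiasedness $\mathbb{E}(\widehat{\mu}_k) = \mu_k$ from Eq. (\ref{equation_stratifiedSampling_mean_expectation}) then leaves $\sum_{k=1}^K \frac{N_k}{N}\mu_k$, which is exactly the expression for the total mean $\mu$ established in Eq. (\ref{equation_mean_with_strata}) of Lemma \ref{lemma_mean_variance_with_strata}. This settles Eq. (\ref{equation_stratifiedSampling_mean_expectation_total}).

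For the variance, the key observation is that the strata $\{\mathcal{D}_k\}$ are disjoint (Definition \ref{definition_strata}) and the SRS draws are carried out separately within each stratum, so the estimators $\widehat{\mu}_1,\dots,\widehat{\mu}_K$ are mutually independent. Independence lets me invoke Eq. (\ref{equation_variance_multiple_independent}), the variance of a weighted sum of independent variables, which discards all covariance cross-terms and yields $\mathbb{V}\text{ar}(\widehat{\mu}) = \sum_{k=1}^K \big(\frac{N_k}{N}\big)^2\,\mathbb{V}\text{ar}(\widehat{\mu}_k)$. Plugging in the per-stratum variance $\big(1 - \frac{n_k}{N_k}\big)\frac{\sigma_k^2}{n_k}$, the stratum-level analogue of the SRS variance in Eq. (\ref{equation_SRS_mean_variance}) recorded in Eq. (\ref{equation_stratifiedSampling_mean_variance}), then reproduces the claimed formula in Eq. (\ref{equation_stratifiedSampling_mean_variance_total}).

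The main, and indeed the only, step requiring genuine justification is the independence claim that licenses dropping the covariances; everything else is direct substitution of cited per-stratum facts. I would argue independence from the sampling design itself: since no item lies in two strata and the within-stratum samples are drawn in separate, non-interacting SRS procedures, the inclusion indicators from different strata share no common randomness, so $\widehat{\mu}_k$ and $\widehat{\mu}_{k'}$ are independent for $k \neq k'$. One minor bookkeeping caution is to ensure the denominator in the single-stratum variance is $n_k$ rather than $n$, matching the SRS result of Eq. (\ref{equation_SRS_mean_variance}), so that the weighted sum collapses to the stated right-hand side.
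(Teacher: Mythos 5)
Your proposal is correct and follows essentially the same route as the paper's own proof: linearity of expectation plus Eq.~(\ref{equation_mean_with_strata}) for unbiasedness, and Eq.~(\ref{equation_variance_multiple_independent}) with independence of the disjoint strata to drop the covariance terms before substituting the per-stratum SRS variance. Your bookkeeping remark about $n_k$ versus $n$ in the per-stratum variance is also well taken, since Eq.~(\ref{equation_stratifiedSampling_mean_variance}) as printed carries an $n$ that should be $n_k$ for the final formula to come out as stated.
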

\begin{proof}
See Appendix \ref{section_appendix_survey_sampling} for proof.
\end{proof}

\begin{corollary}\label{corollary_stratified_sampling_varianceOfMean_proportional}
If the sampling size is proportional to the sizes of strata (called \textbf{proportional allocation} \cite{sukhatme1975allocation}), i.e. 
\begin{align}\label{equation_proportional_allocation}
\frac{n_k}{n} = \frac{N_k}{N},
\end{align}
the variance of mean of sample by stratified sampling is:
\begin{align}\label{equation_stratifiedSampling_varianceOfMean_proportional}
\mathbb{V}\text{ar}(\widehat{\mu}) = \frac{1}{n} (1 - \frac{n}{N}) \sum_{k=1}^K \frac{N_k}{N} \sigma_k^2.
\end{align}
Note that this allocation is especially very useful when classes or strata are imbalanced \cite{he2013imbalanced}.
\end{corollary}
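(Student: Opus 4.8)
The plan is to obtain the result directly from Proposition~\ref{proposition_expectation_variance_of_mean_stratified_sampling} by substituting the proportional-allocation condition~(\ref{equation_proportional_allocation}); no new probabilistic reasoning is required, so the whole argument is an algebraic simplification of the general variance formula under the added hypothesis.

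First I would rewrite the allocation condition $n_k / n = N_k / N$ as $n_k = n\, N_k / N$, expressing each stratum's sample size through the overall sample size $n$ and the relative stratum size $N_k / N$. The key observation to extract next is that this makes the finite-population correction uniform across strata: since $n_k / N_k = n / N$, the factor $1 - n_k/N_k = 1 - n/N$ no longer depends on $k$ and can be pulled out of the summation. I would also record the auxiliary identity $1 / n_k = N / (n\, N_k)$, which will cancel one factor of $N_k$ appearing in the numerator.

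Substituting both expressions into the Proposition's formula $\sum_{k=1}^K (N_k/N)^2 (1 - n_k/N_k)\, \sigma_k^2 / n_k$, each summand collapses: the coefficient $\frac{N_k^2}{N^2} \cdot \frac{N}{n\, N_k}$ simplifies to $\frac{1}{n} \cdot \frac{N_k}{N}$, and after factoring the common, stratum-independent constant $\frac{1}{n}(1 - n/N)$ out of the summation, the remaining sum is $\sum_{k=1}^K \frac{N_k}{N} \sigma_k^2$, which is exactly~(\ref{equation_stratifiedSampling_varianceOfMean_proportional}).

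I do not anticipate a genuine obstacle here, as the corollary is essentially bookkeeping; the only point requiring care is tracking the powers of $N_k$ and $N$ correctly through the cancellation. The conceptual takeaway I would emphasize is that proportional allocation is precisely the condition rendering the per-stratum finite-population correction factor identical across all strata, and it is this uniformity that permits the clean factorization into a single constant multiplying a relative-size-weighted sum of the within-stratum variances.
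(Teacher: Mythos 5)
Your proposal is correct and follows exactly the same route as the paper's proof: substitute the proportional-allocation identity $n_k = n N_k/N$ into Eq.~(\ref{equation_stratifiedSampling_mean_variance_total}), note that $1 - n_k/N_k = 1 - n/N$ is stratum-independent, and cancel one power of $N_k$ to obtain the weighted sum. The algebra you describe matches the paper's step-by-step simplification.
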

\begin{proof}
See Appendix \ref{section_appendix_survey_sampling} for proof.
\end{proof}

Note that we usually have $N \gg 1$ and $N_k \gg 1$ which results in the following approximation of Eq. (\ref{equation_SRS_varianceOfMean_proportional}):
\begin{align}\label{equation_SRS_varianceOfMean_proportional_approximation}
\mathbb{V}\text{ar}(\widehat{\mu}) \approx \frac{1}{n} (1 - \frac{n}{N}) &\bigg[ \sum_{k=1}^K \frac{N_k}{N} \sigma_k^2 \nonumber \\
&+ \sum_{k=1}^K \frac{N_k}{N-1} (\mu_k - \mu)^2 \bigg],
\end{align}
in SRS. 

\begin{corollary}\label{corollary_stratified_sampling_better_than_SRS}
Stratified sampling always improves the variance of estimation over SRS. This improvement is better if the strata are very different from one another. 
Hence, in stratified sampling, it is better to use strata with different characteristics or variation. 
\end{corollary}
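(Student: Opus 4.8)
The plan is to compare directly the two variance expressions already established in the excerpt and to show that their difference is a manifestly non-negative quantity. From Corollary~\ref{corollary_stratified_sampling_varianceOfMean_proportional}, the variance of the stratified estimator under proportional allocation is $\mathbb{V}\text{ar}(\widehat{\mu})_{\text{strat}} = \frac{1}{n}(1-\frac{n}{N})\sum_{k=1}^K \frac{N_k}{N}\sigma_k^2$, which retains only the within-stratum contribution. From Eq.~(\ref{equation_SRS_varianceOfMean_proportional_approximation}), the SRS variance in the large-population regime is $\mathbb{V}\text{ar}(\widehat{\mu})_{\text{SRS}} \approx \frac{1}{n}(1-\frac{n}{N})\big[\sum_{k=1}^K \frac{N_k}{N}\sigma_k^2 + \sum_{k=1}^K \frac{N_k}{N-1}(\mu_k-\mu)^2\big]$, which additionally carries the between-stratum contribution. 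First I would subtract these two.

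The key observation is that the within-stratum terms $\sum_{k} \frac{N_k}{N}\sigma_k^2$ are identical in both formulas and cancel exactly, leaving
\[
\mathbb{V}\text{ar}(\widehat{\mu})_{\text{SRS}} - \mathbb{V}\text{ar}(\widehat{\mu})_{\text{strat}} \approx \frac{1}{n}\Big(1-\frac{n}{N}\Big)\sum_{k=1}^K \frac{N_k}{N-1}(\mu_k-\mu)^2.
\]
Since $n<N$ by Definition~\ref{definition_SRS}, the prefactor $\frac{1}{n}(1-\frac{n}{N})$ is strictly positive, and each summand is the product of the positive weight $\frac{N_k}{N-1}$ with the square $(\mu_k-\mu)^2 \ge 0$. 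Hence the right-hand side is non-negative, which yields $\mathbb{V}\text{ar}(\widehat{\mu})_{\text{SRS}} \ge \mathbb{V}\text{ar}(\widehat{\mu})_{\text{strat}}$ and proves the first claim.

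For the second claim I would simply read off the magnitude of the improvement: the gap equals the between-stratum variance $\frac{1}{n}(1-\frac{n}{N})\sum_{k} \frac{N_k}{N-1}(\mu_k-\mu)^2$, which is an increasing function of the dispersion of the stratum means $\mu_k$ about the global mean $\mu$. Thus the more the strata differ from one another (large $|\mu_k-\mu|$), the larger the reduction in variance, with equality to SRS attained only when all stratum means coincide. This reading gives the stated practical guidance that one should form strata that are internally homogeneous but mutually dissimilar.

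The main obstacle is the reliance on the approximation rather than the exact SRS formula of Corollary~\ref{corollary_SRS_varianceOfMean_proportional}, whose within-stratum weights are $\frac{N_k-1}{N-1}$ rather than $\frac{N_k}{N}$. With the exact coefficients the within-stratum terms do not cancel perfectly, so the clean algebraic cancellation requires the $N\gg 1$, $N_k\gg 1$ regime in which $\frac{N_k-1}{N-1}\approx\frac{N_k}{N}$. I would therefore either state the result as an approximation, as the excerpt does, or, for an exact statement, control the residual $\sum_{k}(\frac{N_k-1}{N-1}-\frac{N_k}{N})\sigma_k^2$ and verify that it is dominated by the between-stratum term so that it cannot overturn the sign of the difference.
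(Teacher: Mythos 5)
Your proposal is correct and follows essentially the same route as the paper: both compare the approximate SRS variance of Eq.~(\ref{equation_SRS_varianceOfMean_proportional_approximation}) with the proportional-allocation stratified variance of Eq.~(\ref{equation_stratifiedSampling_varianceOfMean_proportional}) and observe that SRS carries an additional non-negative between-strata term. Your closing caveat --- that the exact SRS weights are $\frac{N_k-1}{N-1}$ rather than $\frac{N_k}{N}$, so the cancellation is only exact in the $N, N_k \gg 1$ regime --- is a point the paper itself glosses over, and flagging it makes your argument slightly more honest than the original.
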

\begin{proof}
Compare Eq. (\ref{equation_SRS_varianceOfMean_proportional_approximation}) in SRS with Eq. (\ref{equation_stratifiedSampling_varianceOfMean_proportional}) in stratified sampling. The variance of estimate of mean by SRS has an additional second term which is non-negative. 
This means that stratified sampling always reduces the variance of estimation and in the worst case, it does not improve over SRS if all the means of strata are equal to the total mean (i.e., if all strata are very similar). 

Moreover, this second term is the between-strata variance, which is also seen in Fisher discriminant analysis \cite{ghojogh2019fisher}. This shows that if the strata are very different (i.e., if the means of strata are very different from each other), the second term gets bold and the improvement of stratified sampling over SRS gets better.  
\end{proof}

It is noteworthy that the \textbf{proportional allocation} \cite{sukhatme1975allocation} is not necessarily an optimal allocation of sampling sizes per stratum. There is an optimal allocation, named \textbf{Neyman allocation} \cite{bankier1988power}, which tries to allocate the sampling sizes for every stratum $k$ in a way that it minimizes the variance of estimation, i.e., Eq. (\ref{equation_stratifiedSampling_mean_variance_total}):
\begin{equation}
\begin{aligned}
& \underset{\{n_1, \dots, n_K\}}{\text{minimize}}
& & \sum_{k=1}^K \big(\frac{N_k}{N}\big)^2 \big(1 - \frac{n_k}{N_k}\big)\, \frac{\sigma_k^2}{n_k}, \\
& \text{subject to}
& & n_1 + \dots + n_K = n,
\end{aligned}
\end{equation}
which is a discrete optimization task in combinatorial optimization \cite{wolsey1999integer}. 

\subsection{Cluster Sampling}

\begin{definition}
We use the same Definition \ref{definition_strata} for defining clusters. In cluster sampling, the strata are referred to as clusters. 
\end{definition}

\begin{definition}[Cluster Sampling]\label{definition_cluster_sampling}
Let data $\mathcal{D}$ consist of $K$ clusters defined in Eq. (\ref{equation_disjoint_sets}). 
Cluster sampling is simple random sampling (see Definition \ref{definition_SRS}) of size $c < K$ clusters, where all instances of the selected clusters are taken in the sample \cite{barnett1974elements}. 
An illustration of cluster sampling is shown in Fig. \ref{figure_survey_sampling}.
\end{definition}

Therefore, the sample is composed of the sampled clusters. Suppose $K=5$ and the clusters $\mathcal{D}_1$, $\mathcal{D}_3$, and $\mathcal{D}_4$ are sampled; then, the sample would be:
\begin{align}
\mathcal{S} = \mathcal{D}_1 \oplus \mathcal{D}_3 \oplus \mathcal{D}_4.
\end{align}

\begin{example}
The following example clarifies the difference of SRS, bootstrapping, stratified sampling, and cluster sampling. We want to do a survey in the city, asking people some questions. In SRS, we randomly find people in the city and ask them questions. In bootstrapping, we do not record the names of already asked people; thus, there is a possibility that some people are asked more than once. We consider houses of city as strata or clusters. In stratified sampling, we go to every house and randomly interview with some people in each house. In cluster sampling, however, we sample some houses -- rather than going to all houses -- and interview with all people in the selected houses -- rather than sampling people in the houses. This example shows that cluster sampling is for convenience because sampling from houses is much easier than sampling from people in the houses. 
\end{example}

\begin{corollary}\label{corollary_mean_variance_with_clusters}
Suppose data consist of $K$ strata. According to Eqs. (\ref{equation_mean_estimate}) and (\ref{equation_variance_unbiased}), the actual mean and variance of the $k$-th stratum are as Eqs. (\ref{equation_mean_unbiased_stratum}) and (\ref{equation_variance_unbiased_stratum}), respectively. 
\end{corollary}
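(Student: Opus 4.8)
The plan is to observe that this corollary is a direct specialization of the general mean and variance estimators to a single block, exploiting the fact that clusters are defined by exactly the same construction as strata. By the definition immediately preceding Definition \ref{definition_cluster_sampling}, clusters reuse Definition \ref{definition_strata} verbatim; hence a cluster $\mathcal{D}_k$ is nothing other than a stratum under a different name, namely a finite collection $\mathcal{D}_k = \{x_{k,1}, x_{k,2}, \dots, x_{k,N_k}\}$ of size $N_k := |\mathcal{D}_k|$, and the partition $\{\mathcal{D}_k\}_{k=1}^K$ inherits the disjointness and exhaustiveness asserted there.

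First I would apply the mean estimator, Eq. (\ref{equation_mean_estimate}), restricting the sum to the $N_k$ points belonging to $\mathcal{D}_k$. This yields $\mu_k = (1/N_k)\sum_{j=1}^{N_k} x_{k,j}$, which is exactly Eq. (\ref{equation_mean_unbiased_stratum}). Next I would apply the unbiased variance estimator, Eq. (\ref{equation_variance_unbiased}), again over the $N_k$ points of the cluster, giving $\sigma_k^2 = (1/(N_k-1))\sum_{j=1}^{N_k}(x_{k,j}-\mu_k)^2$, which is precisely Eq. (\ref{equation_variance_unbiased_stratum}). Both estimators are well defined because the block contains exactly $N_k$ elements, so no degeneracy arises for $N_k > 1$.

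Since these two formulas coincide term-for-term with those already established in Corollary \ref{corollary_mean_variance_with_strata}, the result follows immediately: the statement is the same as that corollary with the word \emph{stratum} replaced by \emph{cluster}. There is no substantive computation; the only content is the observation that relabelling a partition from strata to clusters leaves each block, and therefore each block's empirical mean and unbiased variance, unchanged. The main obstacle, such as it is, is merely recognizing that no new argument is needed beyond this identification, and being careful to cite the shared definition rather than silently reproving a fact that the earlier corollary already settled.
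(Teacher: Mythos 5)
Your proposal is correct and matches the paper's (implicit) reasoning exactly: the paper offers no separate proof for this corollary precisely because, as you observe, clusters are defined by reusing Definition \ref{definition_strata}, so Eqs. (\ref{equation_mean_unbiased_stratum}) and (\ref{equation_variance_unbiased_stratum}) carry over verbatim from Corollary \ref{corollary_mean_variance_with_strata}. Nothing further is needed.
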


Again, the mean of data with $K$ clusters is as in Eq. (\ref{equation_mean_with_strata}). 
According to the HT estimator, the estimate of mean by cluster sampling is:
\begin{align}
\widehat{\mu} &\overset{(\ref{equation_mean_with_strata})}{=} \sum_{\mathcal{D}_k \in \mathcal{S}} \frac{(N_k / N) \mu_k}{c/K} = \frac{K}{N} \Big( \frac{1}{c} \sum_{\mathcal{D}_k \in \mathcal{S}} N_k\, \mu_k \Big) \nonumber \\
&= \frac{K}{N} \Big( \frac{1}{c} \sum_{\mathcal{D}_k \in \mathcal{S}} \tau_k \Big), \label{equation_mean_estimate_cluster_sampling}
\end{align}
where:
\begin{align}
\tau_k := N_k\, \mu_k. 
\end{align}
According to Proposition \ref{proposition_HT_unbiased}, this is an unbiased estimator of mean.
The term within the parentheses in Eq. (\ref{equation_mean_estimate_cluster_sampling}) is SRS in the cluster level, which makes sense because we have SRS in the cluster level according to the definition of cluster sampling. Hence, if $\widehat{\mu}_* := (1/c) \sum_{\mathcal{D}_k \in \mathcal{S}} \tau_k$ denotes the estimate of mean in the cluster level, Eq. (\ref{equation_mean_estimate_cluster_sampling}) becomes:
\begin{align}
\widehat{\mu} = \frac{K}{N}\, \widehat{\mu}_*. \label{equation_mean_estimate_cluster_sampling_2}
\end{align}

\begin{proposition}\label{proposition_expectation_variance_of_mean_cluster_sampling}
The expectation and variance of the mean of sample by cluster sampling, i.e. Eq. (\ref{equation_mean_estimate_cluster_sampling}), are \cite{barnett1974elements}:
\begin{align}
&\mathbb{E}(\widehat{\mu}) = \mu, \label{equation_clusterSampling_mean_expectation_total} \\
&\mathbb{V}\text{ar}(\widehat{\mu}) = \frac{K^2}{N^2} (1 - \frac{c}{K}) \frac{\sigma_*^2}{c}, \label{equation_clusterSampling_mean_variance_total} 
\end{align}
respectively, where:
\begin{align}
\sigma_*^2 = \frac{1}{K-1} \sum_{k=1}^K (\tau_k - \frac{1}{K} \sum_{k'=1}^K \tau_{k'})^2,
\end{align}
and $\mu$ and $\sigma_k^2$ are the mean of whole data and the variance of $k$-th cluster, defined by Eqs. (\ref{equation_mean_estimate}) and (\ref{equation_variance_unbiased_stratum}), respectively.  
\end{proposition}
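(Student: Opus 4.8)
The plan is to reduce the entire statement to the simple random sampling result of Proposition~\ref{proposition_expectation_variance_of_mean_SRS}, exploiting the deterministic factorization $\widehat{\mu} = (K/N)\,\widehat{\mu}_*$ recorded in Eq.~(\ref{equation_mean_estimate_cluster_sampling_2}). The crucial observation, already flagged in the text preceding the statement, is that since cluster sampling takes \emph{all} instances of each selected cluster, there is no within-cluster randomness: each cluster total $\tau_k := N_k\mu_k$ is a fixed number, and the only source of randomness is the SRS choice of $c$ out of $K$ clusters. Consequently $\widehat{\mu}_* = (1/c)\sum_{\mathcal{D}_k \in \mathcal{S}} \tau_k$ is exactly the ordinary SRS estimate of the mean of an artificial population of size $K$ whose items are $\tau_1, \dots, \tau_K$, with sample size $c$.

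First I would introduce the cluster-level population mean $\mu_* := (1/K)\sum_{k=1}^K \tau_k$ and the unbiased cluster-level variance $\sigma_*^2 := \frac{1}{K-1}\sum_{k=1}^K (\tau_k - \mu_*)^2$, which coincides with the $\sigma_*^2$ in the statement. Applying Proposition~\ref{proposition_expectation_variance_of_mean_SRS} verbatim to this population -- with $K$ and $c$ playing the roles of $N$ and $n$ -- immediately yields $\mathbb{E}(\widehat{\mu}_*) = \mu_*$ and $\mathbb{V}\text{ar}(\widehat{\mu}_*) = (1 - c/K)\,\sigma_*^2/c$.

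Next I would transfer these through the scaling $\widehat{\mu} = (K/N)\,\widehat{\mu}_*$. By linearity of expectation, $\mathbb{E}(\widehat{\mu}) = (K/N)\,\mu_* = \frac{1}{N}\sum_{k=1}^K N_k\mu_k$, and this equals $\mu$ precisely by Eq.~(\ref{equation_mean_with_strata}) of Lemma~\ref{lemma_mean_variance_with_strata}, establishing unbiasedness (consistent with Proposition~\ref{proposition_HT_unbiased}). For the variance, the quadratic scaling $\mathbb{V}\text{ar}(aX) = a^2\,\mathbb{V}\text{ar}(X)$ with $a = K/N$ gives $\mathbb{V}\text{ar}(\widehat{\mu}) = (K^2/N^2)\,\mathbb{V}\text{ar}(\widehat{\mu}_*) = \frac{K^2}{N^2}(1 - c/K)\,\frac{\sigma_*^2}{c}$, which is exactly the claimed formula.

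The only genuinely delicate point is justifying the reduction to SRS at the cluster level -- confirming that treating the $\tau_k$ as fixed item values of a size-$K$ population is legitimate and that Proposition~\ref{proposition_expectation_variance_of_mean_SRS} applies with the $K-1$ denominator convention matching the stated $\sigma_*^2$. Everything after this reduction is mechanical: linearity of expectation, the quadratic scaling of variance, and a single invocation of Eq.~(\ref{equation_mean_with_strata}).
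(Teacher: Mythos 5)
Your proposal is correct and follows essentially the same route as the paper: identify $\widehat{\mu}_*$ as an ordinary SRS mean over the size-$K$ population $\{\tau_1,\dots,\tau_K\}$ with sample size $c$, apply Eq.~(\ref{equation_SRS_mean_variance}), and push the result through the deterministic scaling $\widehat{\mu}=(K/N)\widehat{\mu}_*$. The only (immaterial) difference is that the paper gets $\mathbb{E}(\widehat{\mu})=\mu$ by citing Proposition~\ref{proposition_HT_unbiased} directly, whereas you derive it from the cluster-level SRS expectation together with Eq.~(\ref{equation_mean_with_strata}).
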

\begin{proof}
See Appendix \ref{section_appendix_survey_sampling} for proof.
\end{proof}

\begin{corollary}\label{corollary_cluster_sampling_varianceOfMean_equalClusters}
If the size of clusters are equal, i.e. $N_k = L, \forall k \in \{1, \dots, K\}$, which results in:
\begin{align}\label{equation_euqal_cluster_size}
N = KL \implies \frac{K^2}{N^2} = \frac{1}{L^2},
\end{align}
the variance of mean of sample by cluster sampling is:
\begin{align}\label{equation_clusterSampling_varianceOfMean_equalClsuters}
\mathbb{V}\text{ar}(\widehat{\mu}) = \frac{1}{c} \big(1 - \frac{c}{K}\big) \bigg[ \frac{1}{K-1} \sum_{k=1}^K (\mu_k - \mu)^2 \bigg].
\end{align}
\end{corollary}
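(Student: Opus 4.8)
The plan is to start from the general cluster-sampling variance in Proposition \ref{proposition_expectation_variance_of_mean_cluster_sampling}, namely $\mathbb{V}\text{ar}(\widehat{\mu}) = (K^2/N^2)(1 - c/K)(\sigma_*^2/c)$, and specialize it to the equal-cluster-size case $N_k = L$. Everything in that formula except $\sigma_*^2$ is already in closed form, so the only real work is to rewrite $\sigma_*^2 = \frac{1}{K-1}\sum_{k=1}^K (\tau_k - \frac{1}{K}\sum_{k'=1}^K \tau_{k'})^2$ under this assumption and then cancel the remaining factors.

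First I would substitute $N_k = L$ into $\tau_k := N_k\,\mu_k$, which gives $\tau_k = L\,\mu_k$. Next I would note that the equal-size condition collapses the overall mean into an unweighted average: from Eq. (\ref{equation_mean_with_strata}) we have $\mu = \sum_{k=1}^K (N_k/N)\mu_k = \sum_{k=1}^K (L/(KL))\mu_k = \frac{1}{K}\sum_{k=1}^K \mu_k$. Consequently the average of the cluster totals satisfies $\frac{1}{K}\sum_{k'=1}^K \tau_{k'} = L\cdot\frac{1}{K}\sum_{k'=1}^K \mu_{k'} = L\,\mu$. This identity is the one step that actually drives the corollary, and I expect it to be the only place requiring a moment of thought; the remainder is bookkeeping.

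With this in hand, each centered term becomes $\tau_k - \frac{1}{K}\sum_{k'=1}^K \tau_{k'} = L\,\mu_k - L\,\mu = L(\mu_k - \mu)$, so factoring out $L^2$ yields $\sigma_*^2 = \frac{L^2}{K-1}\sum_{k=1}^K (\mu_k - \mu)^2$. Finally I would substitute this back into the variance formula and apply the identity $K^2/N^2 = 1/L^2$ recorded in Eq. (\ref{equation_euqal_cluster_size}); the $L^2$ produced by $\sigma_*^2$ cancels against the $1/L^2$, leaving precisely the claimed Eq. (\ref{equation_clusterSampling_varianceOfMean_equalClsuters}). There is no genuine obstacle, since the statement is a direct algebraic specialization of Proposition \ref{proposition_expectation_variance_of_mean_cluster_sampling}; the emphasis is simply on correctly tracking that the equal-size assumption turns the size-weighted mean $\mu$ into the plain average of the $\mu_k$, which is what allows the $\tau_k$ deviations to be rewritten in terms of $\mu_k - \mu$.
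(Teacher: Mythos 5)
Your proposal is correct and follows essentially the same route as the paper's proof: both reduce to showing $\tau_k = L\mu_k$ and $\frac{1}{K}\sum_{k'}\tau_{k'} = L\mu$, so that $\sigma_*^2 = \frac{L^2}{K-1}\sum_k(\mu_k-\mu)^2$ and the $L^2$ cancels against $K^2/N^2 = 1/L^2$. No gaps.
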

\begin{proof}
See Appendix \ref{section_appendix_survey_sampling} for proof.
\end{proof}

For comparison of cluster sampling with SRS, we consider the same sample size $n = c\, L$ in SRS. 
According to Eq. (\ref{equation_SRS_mean_variance}), the variance of estimate of mean by SRS, with the sample size $n = c\, L$, is:
\begin{align*}
\mathbb{V}\text{ar}(\widehat{\mu}) &= \big(1 - \frac{n}{N}\big)\, \frac{\sigma^2}{n} = \big(1 - \frac{c\, L}{K L}\big)\, \frac{\sigma^2}{c\, L} \\
&\overset{(\ref{equation_variance_with_strata})}{=} \frac{1}{c} (1 - \frac{c}{K}) \frac{1}{L} \Big[ \sum_{k=1}^K \frac{L-1}{KL-1} \sigma_k^2 \\
&~~~~~~~~~~~~~~ + \sum_{k=1}^K \frac{L}{KL-1} (\mu_k - \mu)^2 \Big].
\end{align*}
As we usually we have $L \gg 1$ and $M \gg 1$, this equation can be approximated as:
\begin{equation}\label{equation_clusterSampling_varianceOfMean_equalClsuters_comparedToSRS}
\begin{aligned}
&\mathbb{V}\text{ar}(\widehat{\mu}) =\\
&\frac{1}{c} (1 - \frac{c}{K}) \frac{1}{L} \Big[ \sum_{k=1}^K \frac{1}{K} \sigma_k^2 + \sum_{k=1}^K \frac{1}{K-1} (\mu_k - \mu)^2 \Big].
\end{aligned}
\end{equation}

\begin{corollary}\label{corollary_cluster_sampling_better_than_SRS}
Comparing Eqs. (\ref{equation_clusterSampling_varianceOfMean_equalClsuters}) and (\ref{equation_clusterSampling_varianceOfMean_equalClsuters_comparedToSRS}) shows that the cluster sampling can be better than SRS and this improvement can be better if the clusters are more similar in terms of their means, $\mu_k$. Note that, in contrast to stratified sampling, cluster sampling is not necessarily better than SRS because of division by $L$. 
\end{corollary}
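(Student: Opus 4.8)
The plan is to set the two variance formulas side by side and isolate their common structure. Both Eq.~(\ref{equation_clusterSampling_varianceOfMean_equalClsuters}) and Eq.~(\ref{equation_clusterSampling_varianceOfMean_equalClsuters_comparedToSRS}) carry the identical prefactor $\frac{1}{c}\big(1-\frac{c}{K}\big)$, so I would factor this out and compare only the bracketed quantities. To this end I would introduce shorthand for the two ingredients that appear: the average within-cluster variance $W := \frac{1}{K}\sum_{k=1}^K \sigma_k^2$ and the between-cluster dispersion of means $B := \frac{1}{K-1}\sum_{k=1}^K (\mu_k - \mu)^2$. Both $W$ and $B$ are manifestly non-negative, which is the only structural fact I need.

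With this notation, cluster sampling gives variance $\frac{1}{c}\big(1-\frac{c}{K}\big)\,B$, whereas SRS with the matched sample size $n = c\,L$ gives $\frac{1}{c}\big(1-\frac{c}{K}\big)\cdot\frac{1}{L}\,(W + B)$. First I would subtract the two and observe that the shared prefactor is positive, so cluster sampling beats SRS precisely when $B < \frac{1}{L}(W+B)$, equivalently when $(L-1)\,B < W$. Reading off this inequality immediately yields both claims of the corollary: when the cluster means $\mu_k$ lie close to the global mean $\mu$ the dispersion $B$ is small, the inequality holds comfortably, and cluster sampling wins; as the $\mu_k$ spread out, $B$ grows and the threshold can be violated, so the improvement is not guaranteed.

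Finally I would articulate the contrast with stratified sampling that the corollary emphasises. In the stratified case (Corollary~\ref{corollary_stratified_sampling_better_than_SRS}) the between-stratum term appeared as an extra additive non-negative summand attached to the SRS expression alone, forcing an unconditional improvement. Here the between-cluster term $B$ appears in \emph{both} expressions, and in the SRS formula it is damped by the factor $1/L$. It is exactly this division by $L$, the cluster size, that breaks the guarantee: because cluster sampling retains $B$ undivided while SRS shrinks the whole bracket $(W+B)$ by $L$, a sufficiently large $B$ can make cluster sampling the worse of the two. The only real subtlety, and the step I would be most careful with, is that $B$ occurs on both sides, so the comparison cannot be settled by a sign argument alone; it reduces to the threshold $(L-1)B \lessgtr W$, and it is the presence of the $1/L$ factor, rather than any missing non-negative term, that explains why the conclusion is conditional.
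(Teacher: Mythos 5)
Your proposal is correct and follows essentially the same route as the paper, which proves this corollary simply by setting Eqs.~(\ref{equation_clusterSampling_varianceOfMean_equalClsuters}) and (\ref{equation_clusterSampling_varianceOfMean_equalClsuters_comparedToSRS}) side by side; you carry out exactly that comparison, factoring the common prefactor $\frac{1}{c}\big(1-\frac{c}{K}\big)$ and reducing the question to the sign of $\frac{1}{L}(W+B)-B$. Your explicit threshold $(L-1)B < W$ is a sharper statement than the paper gives, and your observation that $B$ appears on both sides (so a sign argument alone cannot settle the comparison, unlike the stratified case) correctly pinpoints why the division by $L$ makes the improvement conditional.
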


\begin{example}
The following example, whose credit is for \cite{zhu2017lectureSurveySampling}, shows when stratified sampling and when cluster sampling are better to use. 
Assume we have a dataset $\mathcal{D} = \{1,2,3,1,2,3,1,2,3\}$. On one hand, a good set of strata is $\mathcal{D}_1 = \{1,1,1\}$, $\mathcal{D}_2 = \{2,2,2\}$, and $\mathcal{D}_3 = \{3,3,3\}$ because, according to Corollary \ref{corollary_stratified_sampling_better_than_SRS}, the strata are very different in terms of their means. On the other hand, a good set of clusters is $\mathcal{D}_1 = \{1,2,3\}$, $\mathcal{D}_2 = \{1,2,3\}$, and $\mathcal{D}_3 = \{1,2,3\}$ because, according to Corollary \ref{corollary_cluster_sampling_better_than_SRS}, the clusters are very similar in terms of their means.
These make sense because stratified sampling samples by SRS from every stratum while cluster sampling samples by SRS from the clusters and takes all samples of the selected clusters. 
\end{example}

\subsection{More Advanced Survey Sampling}

\subsubsection{Multistage Sampling}

\begin{definition}[Multistage sampling]
As its name clarifies, multistage sampling \cite{lance2016sampling} draws samples stage-wise where at each stage or level, the population to sample from gets smaller.   
\end{definition}

Multistage sampling divides data into clusters or strata stage-wise and samples within them. 
An example of multistage sampling is cluster sampling in the first stage and then performing SRS within every sampled cluster. Using multistage sampling, we can combine many different survey sampling methods. 

\subsubsection{Network Sampling}

\begin{definition}[Network sampling]
Network sampling \cite{granovetter1976network,frank1977survey,frank2011survey} refers to sampling from a family of networks. Consider a graph $G = (V, E)$ where $V$ is the set of vertices and $E \subseteq V \times V$ is the set of edges. We can have different sub-networks of $G$. Let $F$ denotes the set of sub-networks of $G$. We refer to $G$ as the \textit{population graph} or the \textit{population network}. Sampling networks from the set $F$ is named \textit{network sampling} \cite{schofield2011network}.
\end{definition}

Note that network sampling is a family of methods and not merely one sampling algorithm \cite{heckathorn2017network}. 
There are also some network sampling methods for streaming networks \cite{ahmed2013network}.

\subsubsection{Snowball Sampling}

\begin{definition}[Snowball sampling]
Snowball sampling \cite{goodman1961snowball} has two steps. First, it identifies several potential samples or candidates. Then, the selected samples/candidates select some other samples/candidates based on their own judgments. Its name comes from the analogy of a snowball which gets bigger and bigger by rolling down a hill; here, the sample size also gets larger and larger exponentially. 
\end{definition}

Snowball sampling can be considered as a spacial case of network sampling. 
It can be used in social analysis and sociology \cite{heckathorn2017network}, where the judgement of selected people draws samples in the survey. For example, a private survey is conducted in the social media where people are invited to it. In programming and mathematics, one can write rules (e.g., see fuzzy logic \cite{klir1995fuzzy}) for selecting new samples by the already selected samples. In these cases, as there is no probability involved, snowball sampling is a non-probability sampling method \cite{vehovar2016non}. However, one may want to write the rules of selecting samples stochastically using probability.


\section{Sampling from Distribution: Monte Carlo Methods}\label{section_Monte_Carlo}

Sampling can also be done by sampling from a probability distribution. If the distribution is a simple distribution or if we can have the Cumulative Distribution Function (CDF), we can easily sample from distribution. However, if the distribution is complicated, we cannot simply and directly sample from them. In these situations, we use the Monte Carlo (MC) methods \cite{hammersley2013monte,kalos2009monte}. MC methods can be divided into simple MC methods, Markov Chain Monte Carlo (MCMC) methods, and efficient MC methods \cite{mackay2003information}. In the following, we explain these different methods in detail.  

\subsection{Sampling from Inverse Cumulative Distribution Function}

In some cases, we can easily have the inverse CDF of distribution. An example is dealing with one dimensional distributions where we can easily plot the inverse CDF. It is noteworthy that the inverse CDF is also refered to as the quantile function \cite{parzen1979nonparametric}. One can sample from a distribution using the inverse CDF or the quantile function. Assume the distribution is one dimensional. A random number is drawn from the uniform distribution $U(0,1)$. Feeding this random number to the inverse CDF gives us a random number drawn from the distribution. 
This procedure is illustrated in Fig. \ref{figure_CDF_sampling}. 
It is noteworthy that this type of sampling makes sense because, as Fig. \ref{figure_CDF_sampling} shows, it draws more samples from the modes of distribution as expected. 
This is basic sampling approach used in many statistical methods (e.g., see \cite{shaw2006sampling}). 

\begin{figure}[!t]
\centering
\includegraphics[width=3.2in]{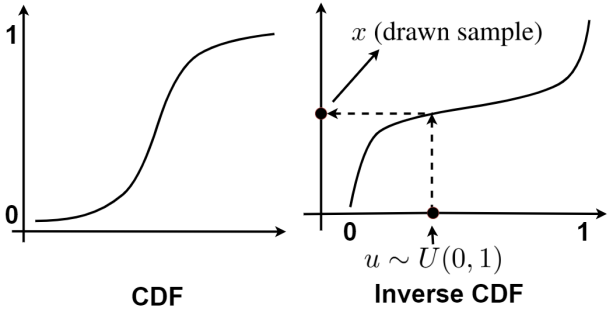}
\caption{Sampling from inverse CDF}
\label{figure_CDF_sampling}
\end{figure}

\subsection{Monte Carlo Approximation}

\subsubsection{Definition}

Suppose we are considering some $d$-dimensional data $x \in \mathbb{R}^d$. Let $f(x)$ be the Probability Density Function (PDF) of data. 
Consider $h(x)$ is a function over the data $x$. 
According to definition, the expectation of function $h(x)$ over the distribution $f(x)$ and the probability of function $h(x)$ belonging to a set $\mathcal{A}$ are:
\begin{align}
&\mathbb{E}(h(x)) = \int h(x)\, f(x)\, dx, \label{equation_MC_approximate_expectation_exact}\\
&\mathbb{P}(h(z) \in \mathcal{A}) = \int_{h(x) \in \mathcal{A}} f(x)\, dx, \label{equation_MC_approximate_prob_exact}
\end{align}
respectively. 

\begin{definition}[Monte Carlo approximation]
Using a sample of size $n$ from distribution $f(x)$ (i.e., $\{x_1, \dots, x_n\} \sim f(x)$), we can approximate Eqs. (\ref{equation_MC_approximate_expectation_exact}) and (\ref{equation_MC_approximate_prob_exact}) by:
\begin{align}
&\mathbb{E}(h(x)) \approx \frac{1}{n} \sum_{i=1}^n h(x_i), \label{equation_MC_approximate_expectation_approximate}\\
&\mathbb{P}(h(z) \in \mathcal{A}) \approx \frac{1}{n} \sum_{i=1}^n \mathbb{I}\big(h(x_i) \in \mathcal{A}\big), \label{equation_MC_approximate_prob_approximate}
\end{align}
where $\mathbb{I}(\cdot)$ denotes the indicator function which is one and zero when its condition is and is not satisfied, respectively. 
\end{definition}

As the above definition states, the MC approximation generates many samples from the distribution in order to approximate the expectation by mean (or average) of the samples.
Obviously, the more the $n$ is, the better the approximation becomes.

\begin{example}
We can approximate the $\pi$ number using the Monte Carlo approximation \cite{kalos2009monte}. As Fig. \ref{figure_MC_approximation_pi} shows, consider a square with length one. A quarter of circle exists within the square with radius one. If we generate many samples uniformly from inside of the square, we see that the proportion of samples which fall within the quarter of circle (green circle points) to the entire samples (both green circles and red squares) approximately goes to $\pi / 4$ as expected. The more samples we generate, the closer this proportion gets to $\pi / 4$. 
\end{example}

\begin{figure}[!t]
\centering
\includegraphics[width=1.5in]{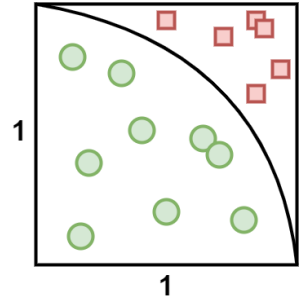}
\caption{Approximating $\pi$ with Monte Carlo approximation}
\label{figure_MC_approximation_pi}
\end{figure}

\subsubsection{Where the Name Came From?}

It is noteworthy to briefly mention where the name ``Monte Carlo'' came from. Some Monte Carlo Markov Chain method, with an approximation approach, was proposed by a physicist named Stanislaw Ulam. Then, John von Neumann also joined him in his work. The work of these two required a code name. One of their colleagues, named Nicholas Metropolis, suggested the name ``Monte Carlo'' referring to the Monte Carlo Casino in Monaco where Ulam's uncle used to borrow money from relatives to gamble; note that gambling is related to probabilistic approach of this method. Hence, they named this technique Monte Carlo \cite{mazhdrakov2018monte}.

\subsection{Simple Monte Carlo Methods}

The MC methods are iterative methods which generate samples from a distribution.
Some of the MC methods are the simple MC methods. These methods draw samples blindly like a  blindfold person because every step or iteration does not depend on the previous iteration. Therefore, the iterations are independent and are performed blindly in the space of data/distribution \cite{mackay2003information}. Some important methods in this category are importance sampling and rejection sampling, explained in the following. 

\subsubsection{Importance Sampling}

Consider a distribution which may be complicated. We can write the Probability Density Function (PDF) or Probability Mass Function (PMF) of this distribution as:
\begin{align}\label{equation_PDF_nasty}
f(X) = \frac{P^*(X)}{Z},
\end{align}
where $Z$ is the marginal distribution or the normalizing factor which can be intractable to compute because of integrating/summing over all domain of data. 
Note that the normalizing factor $Z$ is also called the \textit{partition function} in physical models such as the Ising model \cite{cipra1987introduction,mccoy2014two}. 
The numerator, $F^*(X)$, is the scaled or non-normalized PDF/PMF of distribution and does not necessarily integrate/sum to one but has the shape of distribution. 

\begin{definition}[Importance sampling]\label{definition_importance_sampling}
Consider a function of interest, denoted by $h(X)$. We want to calculate the expectation of this function $h(X)$ on data, over the distribution $f(X)$ or $P^*(X)$. However, as the distribution is complicated and hard to compute, we can estimate this expectation using another simple distribution $Q(X)$. This simple distribution, which we can easily draw samples from, can be any distribution such as uniform or Gaussian. Importance sampling \cite{glynn1989importance} performs this estimation. 
\end{definition}

\begin{proposition}\label{proposition_importance_sampling_expectation}
In importance sampling, we sample from the simple distribution $Q(X)$ rather than sampling from the complicated distribution $P^*(X)$ which is very hard to do. 
First, consider the average of function $h(X)$ on the $n$ samples $\{x_i\}_{i=1}^n$ drawn from $Q(X)$, which is $(1/n) \sum_{i=1}^n h(x_i)$. However, this expression is not yet the desired expectation (see Definition \ref{definition_importance_sampling}) because the samples are drawn from $Q(X)$ rather than $P^*(X)$. To make it an estimation of the desired expectation, we should weight the instances in this summation as:
\begin{align}\label{equation_importance_sampling_expectation}
\mathbb{E}_{x\sim f(X)}(h(x)) \approx \frac{1}{\sum_{j=1}^{n} \frac{P^*(x_j)}{Q(x_j)}} \sum_{i=1}^{n} \frac{P^*(x_i)}{Q(x_i)} h(x_i),
\end{align}
which gets more accurate by increasing the sample size $n$.
\end{proposition}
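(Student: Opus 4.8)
The plan is to recast the target expectation, which is taken under the complicated distribution $f(X)$, as an expectation under the simple proposal distribution $Q(X)$, and then invoke the Monte Carlo approximation of Eq.~(\ref{equation_MC_approximate_expectation_approximate}) on the samples $\{x_i\}_{i=1}^n \sim Q(X)$. First I would begin from the exact integral definition, Eq.~(\ref{equation_MC_approximate_expectation_exact}), and multiply and divide the integrand by $Q(x)$ to perform a change of measure:
\begin{align*}
\mathbb{E}_{x\sim f}(h(x)) = \int h(x)\, \frac{f(x)}{Q(x)}\, Q(x)\, dx = \mathbb{E}_{x\sim Q}\Big(h(x)\, \frac{f(x)}{Q(x)}\Big).
\end{align*}
Substituting $f(x) = P^*(x)/Z$ from Eq.~(\ref{equation_PDF_nasty}) isolates the intractable normalizer $Z$ as a constant factor, giving $\mathbb{E}_{x\sim f}(h(x)) = (1/Z)\,\mathbb{E}_{x\sim Q}\big(h(x)\,P^*(x)/Q(x)\big)$, and applying Eq.~(\ref{equation_MC_approximate_expectation_approximate}) to this latter expectation yields the preliminary estimate $(1/(Zn))\sum_{i=1}^n h(x_i)\,P^*(x_i)/Q(x_i)$.

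The main obstacle is that $Z$ is assumed intractable, so this preliminary estimator cannot be evaluated directly; the crux of the argument is therefore to eliminate $Z$ by estimating it from the same samples. Here I would observe that $Z = \int P^*(x)\,dx$ can itself be written as an expectation under $Q$: since $Z = \int (P^*(x)/Q(x))\,Q(x)\,dx = \mathbb{E}_{x\sim Q}\big(P^*(x)/Q(x)\big)$, the Monte Carlo approximation of Eq.~(\ref{equation_MC_approximate_expectation_approximate}) gives $Z \approx (1/n)\sum_{j=1}^n P^*(x_j)/Q(x_j)$, which conveniently reuses the very importance weights $P^*(x_j)/Q(x_j)$ already computed in the numerator.

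Finally I would substitute this self-normalizing estimate of $Z$ into the preliminary estimator. The two factors of $1/n$ then cancel, leaving
\begin{align*}
\mathbb{E}_{x\sim f}(h(x)) \approx \frac{1}{\sum_{j=1}^{n} \frac{P^*(x_j)}{Q(x_j)}} \sum_{i=1}^{n} \frac{P^*(x_i)}{Q(x_i)}\, h(x_i),
\end{align*}
which is exactly Eq.~(\ref{equation_importance_sampling_expectation}). The claim that the approximation improves with $n$ then follows from the consistency of the Monte Carlo approximation: the normalized numerator sum converges to $\mathbb{E}_{x\sim Q}(h(x)P^*(x)/Q(x)) = Z\,\mathbb{E}_{x\sim f}(h(x))$ and the normalized denominator sum converges to $Z$, so their ratio converges to the desired $\mathbb{E}_{x\sim f}(h(x))$.
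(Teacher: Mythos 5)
Your proposal is correct and follows the same core idea as the paper's proof: a change of measure showing that $\mathbb{E}_{x\sim Q}\big(h(x)\,P^*(x)/Q(x)\big) = \int P^*(x)\,h(x)\,dx$, which is the desired expectation up to the scale factor $Z$. You go one step further than the paper, which stops at ``up to scale'': by also writing $Z = \mathbb{E}_{x\sim Q}\big(P^*(x)/Q(x)\big)$ and estimating it with the same samples, you explicitly justify the self-normalizing denominator $\sum_j P^*(x_j)/Q(x_j)$ in Eq.~(\ref{equation_importance_sampling_expectation}), which makes your argument more complete without changing the underlying approach.
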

\begin{proof}
See Appendix \ref{section_appendix_Monte_Carlo} for proof. 
\end{proof}

It is noteworthy that importance sampling in statistics is related to the umbrella sampling \cite{kumar1992weighted} in physics. 
Moreover, a recent improvement over the importance sampling is the Annealed Importance Sampling. We refer the readers to paper \cite{neal2001annealed} for more information about it. 

\subsubsection{Rejection Sampling}


Assume we want to draw samples from a complicated distribution $f(X)$ or its non-normalized version $P^*(X)$. Rejection sampling
\cite{casella2004generalized,bishop2006pattern,robert2013monte} can be used to draw samples from a simple distribution $Q(X)$, instead, and use those samples to generate samples drawn from $P^*(X)$. 

\begin{definition}[Rejection sampling]
In rejection sampling \cite{casella2004generalized}, we consider a simple-to-sample distribution denoted by $Q(X)$ where, for a positive number $c$, we have:
\begin{align}\label{equation_rejection_sampling_condition}
c\, Q(x) \geq P^*(x), \quad \forall x \in \textbf{dom}(X),
\end{align}
where $\textbf{dom}(X)$ denotes the domain of distribution or the range of data $X$. For sampling $x_i$ from the complicated distribution $P^*(X)$ (see Eq. (\ref{equation_PDF_nasty})), we draw sample from the simple distribution $Q(X)$, i.e., $x_i \sim Q(X)$. Then, we sample a number $u_i$ from the uniform distribution $U(0, c\, Q(x_i))$. If this $u_i$ is smaller than $P^*(x_i)$, it is accepted to be the sample from $P^*(X)$; otherwise, we reject it and repeat this procedure. The algorithm and illustration of rejection sampling can be seen in Algorithm \ref{algorithm_rejection_sampling} and Fig. \ref{figure_Rejection_sampling}, respectively. 
\end{definition}

\SetAlCapSkip{0.5em}
\IncMargin{0.8em}
\begin{algorithm2e}[!t]
\DontPrintSemicolon
    \textbf{Input:} $P^*(X), Q(X), c$\;
    \textbf{Output:} $\mathcal{S} = \{x_i\}_{i=1}^n \sim P^*(X)$\;
    $\mathcal{S} \gets \varnothing$\;
    \For{sample index $i$ from $1$ to $n$}{
        $x_i \sim Q(X)$\;
        $u_i \sim U(0, c\, Q(x_i))$\;
        \uIf{$u_i < P^*(x_i)$}{
            Accept $x_i$: $\mathcal{S} \gets \mathcal{S} \cup \{x_i\}$\;
        }
        \Else{
            Reject $x_i$: $i \gets i - 1$\;
        }
    }
\caption{Rejection sampling}\label{algorithm_rejection_sampling}
\end{algorithm2e}
\DecMargin{0.8em}

\begin{figure}[!t]
\centering
\includegraphics[width=2.8in]{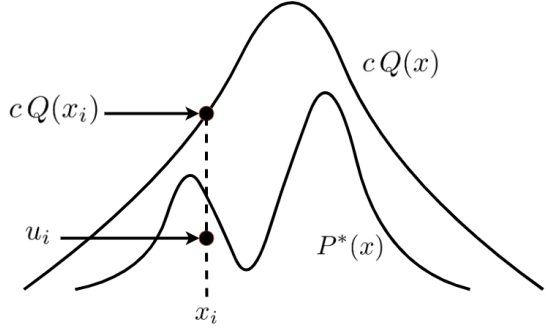}
\caption{Rejection sampling}
\label{figure_Rejection_sampling}
\end{figure}

One challenge in rejection sampling is finding the appropriate value for $c$. The larger $c$ helps in satisfying Eq. (\ref{equation_rejection_sampling_condition}) but results in many number of rejections because of the condition $u_i < P^*(x_i)$. Therefore, there is trade-off here. One may use Algorithm \ref{algorithm_rejection_sampling_finding_c} to find an appropriate $c$ alongside sampling; although this algorithm has many redundancy because a non-valid $c$ requires re-sampling from the scratch. Moreover, the sample size $n$ should be large in Algorithm \ref{algorithm_rejection_sampling_finding_c} to check Eq. (\ref{equation_rejection_sampling_condition}) for most points in $\textbf{dom}(X)$. Note that this algorithm requires a valid $Q(X)$ which satisfies Eq. (\ref{equation_rejection_sampling_condition}) for some $c$ eventually. 

\SetAlCapSkip{0.5em}
\IncMargin{0.8em}
\begin{algorithm2e}[!t]
\DontPrintSemicolon
    \textbf{Input:} $P^*(X), Q(X), c$\;
    \textbf{Output:} $\mathcal{S} = \{x_i\}_{i=1}^n \sim P^*(X)$\;
    $c \gets $ initial small $c$\;
    cIsValid $\gets$ False\;
    \While{\textbf{not} cIsValid}{
        $\mathcal{S} \gets \varnothing$\;
        cIsValid $\gets$ True\;
        \For{sample index $i$ from $1$ to $n$}{
            $x_i \sim Q(X)$\;
            \If{$c\, Q(x_i) < P^*(x_i)$}{
                cIsValid $\gets$ False\;
                Increase $c$ a little\;
                Break the for loop\;
            }
            $u_i \sim U(0, c\, Q(x_i))$\;
            \uIf{$u_i < P^*(x_i)$}{
                Accept $x_i$: $\mathcal{S} \gets \mathcal{S} \cup \{x_i\}$\;
            }
            \Else{
                Reject $x_i$: $i \gets i - 1$\;
            }
        }    
    }
\caption{Rejection sampling with simultaneous calculation of $c$}\label{algorithm_rejection_sampling_finding_c}
\end{algorithm2e}
\DecMargin{0.8em}

There exist more advanced versions of rejection sampling recently proposed in the literature. Some of these methods are adaptive rejection sampling \cite{gilks1992adaptive,gorur2011concave,martino2011generalization}, ensemble rejection sampling \cite{deligiannidis2020ensemble}, discriminator rejection sampling \cite{azadi2018discriminator}, and variational rejection sampling \cite{grover2018variational}, which we do not cover in this paper and refer the readers to them for more information. 

\subsection{Markov Chain Monte Carlo Methods}

The second category of Monte Carlo methods is Markov Chain Monte Carlo (MCMC) methods \cite{mackay2003information,brooks2011handbook,geyer2011introduction}. In MCMC methods, in contrast to the simple Monte Carlo methods, iterations are not independent and blindly sampled but every iteration/step of Monte Carlo is dependent to its previous iteration/step. This feature is referred to as the Markov property, already explained in Section \ref{section_Markov_property}. 

\subsubsection{Metropolis Algorithm}

\begin{definition}[Metropolis algorithm]
Using the Metropolis algorithm, proposed by Metropolis et. al. \cite{metropolis1953equation}, we can sample from a complicated distribution, denoted by $f(X)$ or $P^*(X)$ (see Eq. (\ref{equation_PDF_nasty})), using a simple distribution $Q$ as the proposal function. As Algorithm \ref{algorithm_metopolis} shows, we start from a random number/vector in the range of data. Then, we draw the next sample, based on the current location, using a simple conditional distribution $Q(X_{i+1}; X_i)$ as the proposal function. This proposal function is symmetric, i.e.:
\begin{align}\label{equation_Metropolis_symmetric_proposal_function}
Q(x_{i+1}; x_i) = Q(x_i; x_{i+1}).
\end{align}
With the probability:
\begin{align}\label{equation_Metropolis_p_accept}
p_\text{accept} = \min\Big(\frac{P^*(x_i)}{P^*(x_{i-1})}, 1\Big),
\end{align}
we accept the proposed sample $x_i$; otherwise we reject it. This procedure is repeated until we have all the $n$ samples. 
The procedure of Metropolis algorithm is depicted in Fig. \ref{figure_Metropolis_algorithm}. As this figure shows, more samples are drawn from modes of $P^*(X)$, as expected. 
\end{definition}

\begin{figure}[!t]
\centering
\includegraphics[width=2.8in]{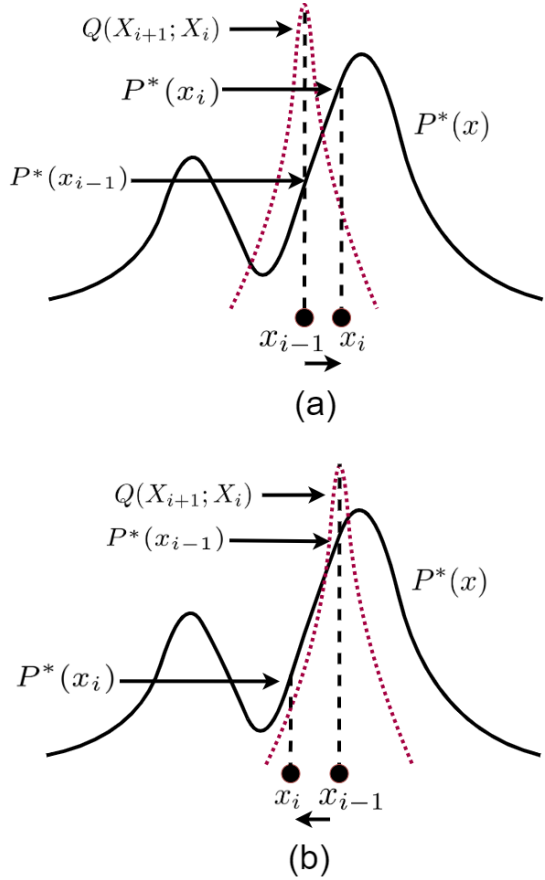}
\caption{Metropolis algorithm for MCMC sampling: (a) $x_i$ is accepted as a drawn sample from $P^*(X)$, (b) The previously drawn sample becomes $x_{i-1}$ in this iteration. The $x_i$ may be accepted as a drawn sample from $P^*(X)$ with probability $P^*(x_i) / P^*(x_{i-1})$.}
\label{figure_Metropolis_algorithm}
\end{figure}

\begin{remark}
Usually, the normalization factor or the partition function $Z$ is computationally expensive to calculate because of the integral or summation over all values. In the distributions where $Z$ does not depend on the point $x_i$, the Metropolis algorithm has the advantage of not requiring to compute $Z$ because in Eq. (\ref{equation_Metropolis_p_accept}), the normalization factors $Z$ are cancelled from the terms in the numerator and denominator.
\end{remark}

An example of the proposal function $Q$ is a Gaussian distribution:
\begin{align}\label{equation_MCMC_proposal_Gaussian}
\mathbb{R}^{d} \ni x_i := x_{i-1} + \mathcal{N}(\b{0}, \sigma^2 \b{I}),
\end{align}
where $\b{I} \in \mathbb{R}^{d \times d}$ is the identity matrix and $\sigma$ determines the step size. The following remark discusses the effect of values for $\sigma$. 

\begin{remark}
In Eq. (\ref{equation_MCMC_proposal_Gaussian}), the appropriate value for $\sigma$ or the step size can be challenging to find. It has been shown in \cite{rosenthal2014optimising} that a good value for $\sigma$ for most cases is $\sigma = 2.38$. Less than this value, e.g. $\sigma=0.1$, results in very small step sizes and a very slow progress of algorithm. Larger than this value, e.g. $\sigma = 25$, results in very large step sizes and many rejections in the Metropolis algorithm because we may jump to very low-probability values in $P^*(X)$ with large step sizes. Another related paper in finding the best $\sigma$ value is \cite{roberts2001optimal}. 
\end{remark}

\SetAlCapSkip{0.5em}
\IncMargin{0.8em}
\begin{algorithm2e}[!t]
\DontPrintSemicolon
    \textbf{Input:} $P^*(X), Q(X_{i+1}; X_{i}), c$\;
    \textbf{Output:} $\mathcal{S} = \{x_i\}_{i=1}^n \sim P^*(X)$\;
    $\mathcal{S} \gets \varnothing$\;
    $x_0 \gets $ a random number/vector in $\textbf{dom}(X)$\;
    \For{sample index $i$ from $1$ to $n$}{
        $x_i \sim Q(X; x_{i-1})$\;
        $p_\text{accept} = \min(\frac{P^*(x_i)}{P^*(x_{i-1})}, 1)$\;
        $u_i \sim U(0,1)$\;
        \uIf{$u_i < p_\text{accept}$}{
            Accept $x_i$: $\mathcal{S} \gets \mathcal{S} \cup \{x_i\}$\;
        }
        \Else{
            Reject $x_i$: $i \gets i - 1$\;
        }
    }
\caption{Metropolis algorithm for MCMC sampling}\label{algorithm_metopolis}
\end{algorithm2e}
\DecMargin{0.8em}

\begin{definition}[Stationary distribution]
Consider a Markov chain (see Section \ref{section_Markov_property}) with the transition function $A(v; u)$ as the probability of transition from state $u$ to state $v$. The Markov chain has a stationary distribution if integrating/summing over all transitions from other states to state $v$ is equal to the probability of state $v$:
\begin{align}\label{equation_stationary_distribution_sumInward}
\int \mathbb{P}(u)\, A(v; u)\, du = \mathbb{P}(v).
\end{align}
In other words, a stationary distribution satisfies \cite{ross1996stochastic,parzen1999stochastic}:
\begin{align}
\mathbb{P}_{t-1}(u) = \mathbb{P}_t(u), \quad \forall u, t.
\end{align}
\end{definition}

\begin{lemma}\label{lemma_stationary_balance_condition}
Consider a Markov chain (see Section \ref{section_Markov_property}) with the transition function $Q(v; u)$. If the probability of states in the Markov chain satisfy the ``balance condition'':
\begin{align}\label{equation_stationary_balance_condition}
\mathbb{P}(u)\, A(v; u) = \mathbb{P}(v)\, A(u; v), \quad \forall u,v,
\end{align}
then, it is a stationary distribution of this Markov chain.
\end{lemma}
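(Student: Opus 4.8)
The plan is to show that the detailed balance condition in Eq.~(\ref{equation_stationary_balance_condition}) forces the distribution $\mathbb{P}$ to satisfy the defining integral equation of a stationary distribution, Eq.~(\ref{equation_stationary_distribution_sumInward}). The strategy is purely algebraic: I would start from the balance condition and integrate out one of the two state variables so that the left-hand side becomes exactly the inward-flux integral appearing in the stationarity definition.

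Concretely, first I would fix the target state $v$ and integrate both sides of Eq.~(\ref{equation_stationary_balance_condition}) with respect to $u$ over the whole state space, obtaining
\begin{align}
\int \mathbb{P}(u)\, A(v; u)\, du = \int \mathbb{P}(v)\, A(u; v)\, du.
\end{align}
The left-hand side is already the quantity I want to evaluate. On the right-hand side, since $\mathbb{P}(v)$ does not depend on the integration variable $u$, I would pull it outside the integral to obtain $\mathbb{P}(v) \int A(u; v)\, du$.

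The one substantive step — and the only place any hypothesis is genuinely used — is the normalization of the transition kernel. Because $A(\cdot; v)$ is the conditional probability distribution of the next state given the current state $v$, it integrates to one, i.e. $\int A(u; v)\, du = 1$. Substituting this collapses the right-hand side to $\mathbb{P}(v)$, so that
\begin{align}
\int \mathbb{P}(u)\, A(v; u)\, du = \mathbb{P}(v),
\end{align}
which is precisely Eq.~(\ref{equation_stationary_distribution_sumInward}), the definition of a stationary distribution. Since $v$ was arbitrary, the equality holds for every state and the claim follows. (For a discrete chain the identical argument applies with the integrals replaced by sums over the state space.)

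I do not anticipate any real obstacle: the argument is essentially a one-line integration of the balance condition. The only point demanding care is to make \emph{explicit} that $A$ is a properly normalized transition function, so that $\int A(u; v)\, du = 1$; this is implicit in calling $A$ the transition function of a Markov chain, but it should be stated clearly because it is exactly the fact that drives the crucial cancellation.
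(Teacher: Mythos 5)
Your proposal is correct and follows the same route as the paper's proof: integrate the balance condition over $u$, pull $\mathbb{P}(v)$ out of the integral, and use the normalization $\int A(u;v)\,du = 1$ to obtain Eq.~(\ref{equation_stationary_distribution_sumInward}). Your explicit emphasis on the normalization of the transition kernel is precisely the step the paper marks as $(a)$.
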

\begin{proof}
See Appendix \ref{section_appendix_Monte_Carlo} for proof. 
\end{proof}

The Metropolis algorithm can be seen as a Markov chain and that is why it is in the category of MCMC methods. 
The transition function in Metropolis algorithm is the multiplication of the proposal function $Q(X_{i}; X_{i-1})$ and the probability of acceptance of proposal (i.e., Eq. (\ref{equation_Metropolis_p_accept})):
\begin{align}\label{equation_Metropolis_transition_function}
A(x_{i}; x_{i-1}) = Q(x_{i}; x_{i-1}) \times \min\Big(\frac{P^*(x_i)}{P^*(x_{i-1})}, 1\Big).
\end{align}

\begin{proposition}\label{proposition_Metropolis_symmetric}
The transition function in the Metropolis algorithm, which is Eq. (\ref{equation_Metropolis_transition_function}), is symmetric with respect to the previous sample $x_{i-1}$ and the new sample $x_i$. And therefore, the Metropolis algorithm has a stationary distribution. 
\end{proposition}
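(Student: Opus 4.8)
The plan is to show that the target distribution $f(X) = P^*(X)/Z$ satisfies the balance condition of Lemma~\ref{lemma_stationary_balance_condition} with respect to the transition function $A$ of Eq.~(\ref{equation_Metropolis_transition_function}); once that holds, the lemma immediately delivers that $f$ is a stationary distribution of the Metropolis chain. The phrase ``symmetric with respect to $x_{i-1}$ and $x_i$'' should be read not as symmetry of $A$ itself (which does not hold) but as the statement that the $f$-weighted product $\mathbb{P}(x_{i-1})\, A(x_i; x_{i-1})$ is invariant under the swap $x_{i-1} \leftrightarrow x_i$, which is precisely Eq.~(\ref{equation_stationary_balance_condition}) with $\mathbb{P}=f$.

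First I would substitute $\mathbb{P}=f$ and the definition of $A$ into the left-hand side to get
\begin{align*}
f(x_{i-1})\, A(x_i; x_{i-1}) = \frac{P^*(x_{i-1})}{Z}\, Q(x_i; x_{i-1})\, \min\Big(\frac{P^*(x_i)}{P^*(x_{i-1})}, 1\Big).
\end{align*}
The key step is the elementary identity $a \min(b/a, 1) = \min(a, b)$, valid for positive $a,b$, applied with $a = P^*(x_{i-1})$ and $b = P^*(x_i)$. This collapses the expression into
\begin{align*}
f(x_{i-1})\, A(x_i; x_{i-1}) = \frac{1}{Z}\, Q(x_i; x_{i-1})\, \min\big(P^*(x_{i-1}), P^*(x_i)\big),
\end{align*}
whose right-hand side is manifestly unchanged when $x_{i-1}$ and $x_i$ are interchanged, because $\min(\cdot,\cdot)$ is symmetric and the proposal $Q$ is symmetric by Eq.~(\ref{equation_Metropolis_symmetric_proposal_function}).

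Having this, I would run the same collapsing identity in reverse on the swapped expression to recognize the right-hand side as $f(x_i)\, A(x_{i-1}; x_i)$, establishing the balance condition $f(x_{i-1})\, A(x_i; x_{i-1}) = f(x_i)\, A(x_{i-1}; x_i)$; Lemma~\ref{lemma_stationary_balance_condition} then concludes that $f$ is stationary. The main obstacle is conceptual rather than computational: correctly interpreting the loosely worded ``symmetry'' as detailed balance on the $f$-weighted transition, and noting that the $\min$ absorbs the acceptance ratio without any casework (the identity $a\min(b/a,1)=\min(a,b)$ handles this cleanly, sparing a split into the cases determined by the sign of $P^*(x_i)-P^*(x_{i-1})$). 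A minor point worth flagging is that the normalizer $Z$ cancels out of the balance condition, consistent with the earlier remark that Metropolis never needs to compute $Z$.
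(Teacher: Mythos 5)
Your proof is correct and follows essentially the same route as the paper's: both verify the detailed balance condition of Lemma~\ref{lemma_stationary_balance_condition} by absorbing the acceptance ratio into the $\min$ (your identity $a\min(b/a,1)=\min(a,b)$ is exactly the paper's step of multiplying $P^*(x_{i-1})$ through the $\min$) and then invoking the symmetry of $Q$ from Eq.~(\ref{equation_Metropolis_symmetric_proposal_function}). Your two refinements --- reading the loosely worded ``symmetric'' as detailed balance of the $P^*$-weighted transition rather than of $A$ itself, and carrying the normalizer $Z$ explicitly so that its cancellation is visible --- are welcome clarifications but do not change the argument.
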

\begin{proof}
See Appendix \ref{section_appendix_Monte_Carlo} for proof. 
\end{proof}

\subsubsection{Metropolis-Hastings Algorithm}

Hastings generalized the Metropolis algorithm to not necessarily symmetric proposal function. We call this algorithm the Metropolis-Hastings algorithm \cite{hastings1970monte}. Some papers refer to this method as the Metropolis algorithm, however.  
The difference of the Metropolis-Hastings algorithm from the Metropolis algorithm is in the probability of acceptance of the proposed sample. In other words, Eq. (\ref{equation_Metropolis_p_accept}) is replaced by:
\begin{align}\label{equation_Metropolis_Hastings_p_accept}
p_\text{accept} = \min\Big(\frac{P^*(x_i)\, Q(x_{i-1}; x_i)}{P^*(x_{i-1})\, Q(x_i; x_{i-1})}, 1\Big).
\end{align}
This equation can be seen as multiplication of two terms:
\begin{align}\label{equation_Metropolis_Hastings_p_accept_2}
p_\text{accept} = \min\Big(\frac{P^*(x_i)}{Q(x_i; x_{i-1})} \times \frac{Q(x_{i-1}; x_i)}{P^*(x_{i-1})}, 1\Big).
\end{align}
Comparing Eq. (\ref{equation_Metropolis_Hastings_p_accept_2}) with Eq. (\ref{equation_importance_sampling_expectation}) shows that it contains the weights used in importance sampling for both the old and newly proposed samples. 

\subsubsection{Gibbs Sampling}

Gibbs sampling, firstly proposed by \cite{geman1984stochastic}, draws samples from a $d$-dimensional multivariate distribution $P^*(X)$ using $d$ conditional distributions  \cite{gelfand2000gibbs,mackay2003information,bishop2006pattern}. This method is named after the physicist Josiah Willard Gibbs. 
The intuition of this method is similar to the coordinate descent in optimization \cite{wu2008coordinate,wright2015coordinate}. It assumes that the conditional distributions of every coordinate/feature/dimension of data conditioned on the rest of coordinates are simple to draw samples from. 

\begin{definition}
In Gibbs sampling, we desire to sample from a multivariate distribution $P^*(X)$ where $X \in \mathbb{R}^d$. We denote:
\begin{align}
\mathbb{R}^d \ni x_i := [x_i^{(1)}, x_i^{(2)}, \dots, x_i^{(d)}]^\top.
\end{align}
We start from a random $d$-dimensional vector int he range of data. Then, we sample the first dimension of the first sample from the distribution of the first dimension conditioned on the other dimensions. We do it for all dimensions, where the $j$-th dimension is samples as:
\begin{align}
x_i^{(j)} \sim P^*(X^{(j)}\, |\, X^{(1)}, \dots, X^{(j-1)}, X^{(j+1)}, \dots, X^{(d)}).
\end{align}
We do this for all dimensions until all dimensions of the first sample are drawn. Then, starting from the first sample, we repeat this procedure for the dimensions of the second sample. We iteratively perform this for all samples; however, some initial samples are not yet valid because the algorithm has started from a not-necessarily valid vector. We accept all samples after some burn-in iterations, denoted by $t_\text{burnIn}$. Algorithm \ref{algorithm_Gibbs_sampling} shows the procedure of Gibbs sampling. 
\end{definition}

One of the challenges of Gibbs sampling is that we do not know exactly what burn-in iteration is appropriate. Large burn-in iteration results in more useless computations and small burn-in iteration may provide us with some not valid samples. However, it has been shown in the literature that Gibbs sampling, as well as Metropolis algorithms, are very fast and usually even a small burn-in iteration works well \cite{dwivedi2018log}.

\SetAlCapSkip{0.5em}
\IncMargin{0.8em}
\begin{algorithm2e}[!t]
\DontPrintSemicolon
    \textbf{Input:} $P^*(X), Q(X_{i+1}; X_{i}), c$\;
    \textbf{Output:} $\mathcal{S} = \{x_i\}_{i=1}^n \sim P^*(X)$\;
    $\mathcal{S} \gets \varnothing$\;
    $x_0 \gets $ random $d$-dimensional vector in $\textbf{dom}(X)$\;
    \For{sample index $i$ from $1$ to $n+t_\text{burnIn}$}{
        $x_i \gets x_{i-1}$\;
        \For{dimension $j$ from $1$ to $d$}{
            $x_i^{(j)} \sim P^*(X^{(j)}\, |\, X^{(1)}, \dots, X^{(j-1)},$ \\
            $\quad \quad \quad \quad \quad \quad \quad \quad X^{(j+1)}, \dots, X^{(d)})$\;
            $x_i \gets [x_i^{(1)}, x_i^{(2)}, \dots, x_i^{(d)} ]^\top$\;
        }
        \If{$i \geq t_\text{burnIn}$}{
            $\mathcal{S} \gets \mathcal{S} \cup \{x_i\}$\;
        }
    }
\caption{Gibbs sampling}\label{algorithm_Gibbs_sampling}
\end{algorithm2e}
\DecMargin{0.8em}

\begin{proposition}\label{proposition_Gibbs_special_case_Metropolis}
Gibbs sampling can be seen as a special case of the Metropolis-Hastings algorithm which accepts the proposed samples with probability one:
\begin{align}
p_\text{accept} = 1.
\end{align}
\end{proposition}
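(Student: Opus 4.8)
The plan is to show that Gibbs sampling is exactly a Metropolis--Hastings scheme whose proposal, at each coordinate update, is the full conditional distribution, and then to verify that the Metropolis--Hastings acceptance ratio in Eq.~(\ref{equation_Metropolis_Hastings_p_accept}) collapses to $1$. First I would fix a single coordinate update, say of the $j$-th dimension, and observe that the proposed sample $x_i$ and the current sample $x_{i-1}$ agree in every coordinate except possibly the $j$-th. Writing $x^{(-j)}$ for the shared block of the remaining $d-1$ coordinates, the Gibbs proposal is the forward conditional
\begin{align*}
Q(x_i; x_{i-1}) = P^*(x_i^{(j)} \,|\, x^{(-j)}),
\end{align*}
and, because only the $j$-th coordinate changes, the reverse proposal conditions on the same block:
\begin{align*}
Q(x_{i-1}; x_i) = P^*(x_{i-1}^{(j)} \,|\, x^{(-j)}).
\end{align*}

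Next I would substitute these two expressions into the Metropolis--Hastings acceptance ratio and use the elementary factorization of the joint into a conditional times a marginal, namely $P^*(x_i) = P^*(x_i^{(j)} \,|\, x^{(-j)})\, P^*(x^{(-j)})$ and likewise for $x_{i-1}$. The argument of the $\min$ then becomes
\begin{align*}
\frac{P^*(x_i)\, Q(x_{i-1}; x_i)}{P^*(x_{i-1})\, Q(x_i; x_{i-1})}
= \frac{P^*(x_i^{(j)} \,|\, x^{(-j)})\, P^*(x^{(-j)})\, P^*(x_{i-1}^{(j)} \,|\, x^{(-j)})}{P^*(x_{i-1}^{(j)} \,|\, x^{(-j)})\, P^*(x^{(-j)})\, P^*(x_i^{(j)} \,|\, x^{(-j)})} = 1,
\end{align*}
so that $p_\text{accept} = \min(1,1) = 1$, as claimed.

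The main obstacle is conceptual rather than computational: correctly identifying the forward and reverse proposals and recognizing that both condition on the \emph{identical} block $x^{(-j)}$, which is what makes every factor cancel in pairs. I would also note that the same cancellation explains why the intractable partition function $Z$ never enters; since $P^*$ differs from the normalized $f$ only by the constant $Z$ (see Eq.~(\ref{equation_PDF_nasty})), and $Z$ appears equally in numerator and denominator, the ratio is unchanged whether one works with $P^*$ or with $f$. Finally, because each single-coordinate Gibbs step is thus an always-accepting Metropolis--Hastings move, the full Gibbs sweep over all $d$ coordinates is a composition of such moves, which establishes the stated special-case relationship.
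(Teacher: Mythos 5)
Your proof is correct and follows essentially the same route as the paper's: identify the Gibbs proposal with the full conditional, factor $P^*$ into conditional times marginal, and use the fact that the block $x^{(-j)}$ is unchanged so all factors cancel and $p_\text{accept}=\min(1,1)=1$. Your use of a single shared symbol $x^{(-j)}$ for the unchanged block is a slightly cleaner bookkeeping of the same cancellation the paper performs.
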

\begin{proof}
See Appendix \ref{section_appendix_Monte_Carlo} for proof. 
\end{proof}

\subsubsection{Slice Sampling}

One of the problems with the Metropolis and Metropolis-Hastings algorithms is not knowing the best step size in the proposal function (e.g., see Eq. (\ref{equation_MCMC_proposal_Gaussian})). Slice sampling, proposed by Neal \cite{neal2003slice} and Skilling \cite{skilling2003slice}, handles this issue by being robust to step size.

Slice sampling is used to draw samples from a complicated distribution $P^*(X)$. 
The algorithm of slice sampling is depicted in Fig. \ref{figure_Slice_sampling}. Initially, a random point is considered in $\textbf{dom}(X)$.
Then, a arbitrary direction (line) in the space of $d$-dimensional data is considered to deal with a one-dimensional distribution (as seen in Fig. \ref{figure_Slice_sampling}). Note that in Gibbs sampling, only the direction along one of the dimensions was considered in the conditional distributions; however, slice sampling gives freedom of choice to user to take any direction in the data space. 

Similar to what we had in rejection sampling, we draw a random number from the uniform distribution, i.e., $u_{i-1} \sim U(0, P^*(x_{i-1}))$. 
We consider a \textit{slice} with length. or step size, $\delta$, around the point $u_{i-1}$, as shown in Fig. \ref{figure_Slice_sampling}. Note that the point $u_{i-1}$ can be at any location in the slice, and not necessarily in its middle. The method is also very robust to the length of slice or the step size. As long as the end of slices at the end sides fall above the distribution $P^*(X)$, we continue to concatenate slices as shown in the figure. 

Afterwards, a random number is selected in the range of concatenated slices. If the selected point is above the distribution $P^*(X)$, it is rejected and, at the side of the selected point with respect to $x_{i-1}$, the rest of slices is removed until the end. This ensures purifying the range of concatenated slices. If the selected point is rejected, another random point is selected in the purified range of slices. If it falls again above $P^*(X)$, it gets rejected and the slices are purified again. However, if it falls under $P^*(X)$, it is accepted to be next drawn sample, i.e., $x_i$. This procedure is repeated until we have all $\mathcal{S} = \{x_i\}_{i=1}^n$ samples. 
As expected, this algorithm samples more points from the modes of distribution.

\begin{figure*}[!t]
\centering
\includegraphics[width=6.5in]{./images/Slice_sampling}
\caption{The steps in slice sampling algorithm.}
\label{figure_Slice_sampling}
\end{figure*}

\subsection{Efficient Monte Carlo Methods}

\subsubsection{Random Walk Behaviour of Monte Carlo Methods}

Metropolis algorithms have a random walk behaviour \cite{spitzer2013principles}; therefore, they usually progress slowly to explore most of $\textbf{dom}(X)$. For example, we have lengthy random walk in Gibbs sampling, especially when the dimensions are highly correlated \cite{mackay2003information}. 

The following example, taken from \cite{mackay2003information,mackay2014youtube}, can show why the random walk behaviour in Metropolis algorithms is slow. 
Consider a discrete uniform distribution $U\{1, 2, \dots, L\}$. Assume the possible actions for drawing the next sample based on the previous sample is to move $\delta$ steps to left or right, each with probability $0.5$. 
The distance of sample from the sample $T$ iterations before is:
\begin{align}\label{equation_random_walk_Delta_x}
\Delta x = \sum_{i=1}^T s_t,
\end{align}
where $s_t \in \{-1, +1\}$ because the distance of possible values in the distribution $U\{1, 2, \dots, L\}$ is one. 
The variance of this distance is equal to the average, or expected value, of $(\Delta x)^2$ because of the quadratic characteristic of variance. Hence, we have:
\begin{align}
\mathbb{V}\text{ar}(\Delta x) = \langle (\Delta x)^2 \rangle \overset{(\ref{equation_random_walk_Delta_x})}{=} \sum_{i=1}^T \langle S_t^2 \rangle \overset{(a)}{=} T,
\end{align}
where $(a)$ is because $s_t \in \{-1, +1\}$ so $s_t^2 = 1$. 
In order to explore all $\textbf{dom}(X)$ which is $\{1, 2, \dots, L\}$, with the step $\delta$, we want the standard deviation of distance to be $L/\delta$. Therefore, the variance of distance should be $(L/\delta)^2$:
\begin{align}\label{equation_random_walk_complexity}
\sum_{i=1}^T \langle S_t^2 \rangle = T \overset{\text{set}}{=} \big(\frac{L}{\delta}\big)^2.
\end{align}
Hence, the run-time complexity of random walk for exploring the data with range $L$ is quadratic with respect to $L$. 
In other words, we need to draw at least $(L/\delta)^2$ samples to expect to face a fresh independent sample. 

Assume data have $r$ dimensions along which deviation of data, denoted by $\ell$, is roughly small. However, the rest of $d-r$ dimensions have large deviation, denoted by $L$. We have $\ell \ll L$. 
In other words, we have a $r$-dimensional subspace of data \cite{ghojogh2019feature}.
The probability that we accept the new proposal is proportional to the volume of $r$-dimensional hyper-sphere with radius $\ell$ to the volume of $r$-dimensional proposal hyper-sphere with radius/step $\delta$:
\begin{align}
p_\text{accept} = \frac{\ell^r}{\delta^r} = \big(\frac{\ell}{\delta}\big)^r,
\end{align}
because our restriction on acceptance of proposed move in distribution is $\ell$ and not $L$. There is a trade-off here. If we take large step size, i.e. $\delta \gg \ell$, the probability of acceptance becomes very small and we will reject many proposals; therefore, the exploration pacing of algorithm gets very slow. In contrary, if we take $\delta \ll \ell$, the step size gets very small and our exploration pacing gets slow but because of small step size rather than for many rejections. Hence, both very small and very large step sizes are bad choices. A good choice is $\delta \approx \ell$ to both have large enough step size and accept proposals with high probability.   

As Eq. (\ref{equation_random_walk_complexity}) indicates, the time complexity of Metropolis algorithms is quadratic. In the following, we introduce more efficient Metropolis algorithms which explore $\textbf{dom}(X)$ much faster. 

\subsubsection{Hamiltonian (Hybrid) Monte Carlo}

Hamiltonian Monte Carlo (HMC), also called Hybrid Monte Carlo (HMC) \cite{duane1987hybrid}, is used for faster sampling from a distribution compared to Metropolis algorithm. The drawn samples explore the range of data, $\textbf{dom}(X)$, faster. 

Many of the real-world distributions exist in the exponential distribution family \cite{andersen1970sufficiency}. In many physical models, we can model the system with a Boltzmann distribution \cite{cipra1987introduction,mccoy2014two}:
\begin{align}
P^*(X) = \frac{e^{-E(X)}}{Z},
\end{align}
where $Z$ is the normalizing factor or the so-called partition function and $E(X)$ is the energy term. 
In HMC, we augment the state space $X \in \mathbb{R}^d$ with momentum variables $p \in \mathbb{R}^d$ and define the Hamiltonian as:
\begin{align}
\mathbb{R} \ni H(x,p) := E(x) + K(p).
\end{align}
Hence, the distribution is changed to:
\begin{align}\label{equation_P_Boltzmann_with_Hamiltonian}
P^*(X,p) = \frac{e^{-H(X,p)}}{Z} = \frac{1}{Z} e^{-E(X)} e^{-K(p)},
\end{align}
which is separable; therefore, marginalization over $x$ or $p$ can discard the other one. 

HMC makes use of the Newton's law \cite{scheck2010mechanics} because it is very related to physical models. According to the Newton's law, we have \cite{mackay2003information}:
\begin{align}
&\dot{x} = p, \\
&\dot{p} = -\frac{\partial E(x)}{\partial x}, 
\end{align}
where the dot above the variable denotes gradient. 

\begin{definition}
HMC, whose procedure is shown in Algorithm \ref{algorithm_Hamiltonian_MC}, contains three steps iteratively. At every iteration, first, it randomizes the momentum by sampling from marginalization of Eq. (\ref{equation_P_Boltzmann_with_Hamiltonian}) over momentum. 
The sampled momentum is accepted with probability one, as done in Gibbs sampling. 
Then, it applies the Newton's law using leapfrog steps, as described in Algorithm \ref{algorithm_Hamiltonian_MC}, to propose a new sample $x_i$.  
Finally, it decides to accept or reject the newly proposed sample $x_i$ based on change in energy. In physical models, we tend to move toward less energy \cite{cipra1987introduction,mccoy2014two}. Hence, if energy has reduced by the new sample, it is accepted definitely. Otherwise, we accept the proposal with some probability $e^{-\Delta H}$. This behaviour is like the Metropolis algorithm. 
\end{definition}

This method is named hybrid MC because it has a hybrid of behaviours of Gibbs sampling and Metropolis algorithm in sampling the momentum and proposed sample, respectively. 
Note that the randomization of momentum makes the HMC algorithm very fast to explore $\textbf{dom}(X)$. It can be shown that the time complexity if HMC is linear with respect to the range of data, i.e., $\mathcal{O}(L/\delta)$ (cf. Eq. (\ref{equation_random_walk_complexity}) for comparison) \cite{mackay2003information}. 

\SetAlCapSkip{0.5em}
\IncMargin{0.8em}
\begin{algorithm2e}[!t]
\DontPrintSemicolon
    \textbf{Input:} $P^*(X), Q(X_{i+1}; X_{i}), c$\;
    \textbf{Output:} $\mathcal{S} = \{x_i\}_{i=1}^n \sim P^*(X)$\;
    $\mathcal{S} \gets \varnothing$\;
    $x_1 \gets $ a random vector in $\textbf{dom}(X)$\;
    $g_1 \gets \frac{\partial E(x_1)}{\partial x}$\;
    \For{sample index $i$ from $1$ to $n$}{
        // randomize the momentum $p_i$:\;
        $p_i \sim \frac{e^{-K(p)}}{Z} = \mathcal{N}(0,1)$\;
        $K(p_i) \gets \frac{p_i^\top p_i}{2}$\;
        $H \gets E(x_i) + K(p_i)$\;
        // Newton's law (with leapfrog steps):\;
        $x_{i,\text{new}} \gets x_i$\;
        $g_{i,\text{new}} \gets g_i$\;
        \For{$t$ from $1$ to $T$}{
            $p_i \gets p_i - \eta\, \frac{g_{i,\text{new}}}{2}$\;
            $x_{i,\text{new}} \gets x_{i,\text{new}} + \eta\, p_i$\;
            $g_{i,\text{new}} \gets \frac{\partial E(x_{i,\text{new}})}{\partial x}$\;
            $p_i \gets p_i - \eta\, \frac{g_{i,\text{new}}}{2}$\;
        }
        // accept or reject based on energy change:\;
        $K(p_i) \gets \frac{p_i^\top p_i}{2}$\;
        $H_\text{new} \gets E(x_{i,\text{new}}) + K(p_i)$\;
        $\Delta H \gets H_\text{new} - H$\;
        \uIf{$\Delta H < 0$}{
            Accept $x_i$: $\mathcal{S} \gets \mathcal{S} \cup \{x_{i,\text{new}}\}$\; 
        }\uElseIf{$u_i \sim U(0,1) < e^{-\Delta H}$}{
            Accept $x_{i,\text{new}}$: $\mathcal{S} \gets \mathcal{S} \cup \{x_{i,\text{new}}\}$\; 
        }
        \Else{
            Reject $x_{i,\text{new}}$: $i \gets i - 1$\;
        }
        \If{$x_i$ is Accepted}{
            $g_i \gets g_{i,\text{new}}$\;
            $x_i \gets x_{i,\text{new}}$\;
        }
    }
\caption{Hamiltonian (or hybrid) Monte Carlo sampling}\label{algorithm_Hamiltonian_MC}
\end{algorithm2e}
\DecMargin{0.8em}

\subsubsection{Overrelaxation for Gibbs Sampling}

As mentioned before, Gibbs sampling has a very slow random walk behaviour, especially when the dimensions of data are very correlated \cite{mackay2003information}. Some methods, named overrelaxation, are proposed for accelerating the pacing of Gibbs sampling in exploring $\textbf{dom}(X)$ so that the jumps between samples get larger.
In the following, we introduce two methods for overrelaxation. 

\textbf{Adler's overrelaxation:}
\textit{Adler's overrelaxation} \cite{adler1981over} is for a special case where the conditional distributions in Gibbs sampling are all Gaussian distributions.
Its main idea is that, in contrast to Gibbs sampling where $x_{i}^{(j)}$ is independent of $x_{i-1}^{(j)}$, we take $x_{i}^{(j)}$ to be at the \textit{opposite} location of the conditional distribution $P^*(X^{(j)}\, |\, X^{(1)}, \dots, X^{(j-1)}, X^{(j+1)}, \dots, X^{(d)})$ with respect to its expected value (or mean). This also reminds us of the concept of opposition-based learning \cite{tizhoosh2005opposition}. It can be shown that the opposite location of $x_{i-1}^{(j)}$ with respect to the mean of the conditional distribution is:
\begin{align}
x_i^{(j)} \gets \mu + \alpha (x_{i-1}^{(j)} - \mu) + (1 - \alpha^2)^{0.5} \sigma \nu,
\end{align}
where $\nu \sim \mathcal{N}(0,1)$ and $\alpha \in [-1, 1]$ is a parameter which is usually negative (if it is positive, the method is called underrelaxation) \cite{mackay2003information}.
The procedure for Gibbs sampling with Alder's overrelaxation is shown in Algorithm \ref{algorithm_Gibbs_sampling_Adler_overrelaxation}. 

\SetAlCapSkip{0.5em}
\IncMargin{0.8em}
\begin{algorithm2e}[!t]
\DontPrintSemicolon
    \textbf{Input:} $P^*(X), Q(X_{i+1}; X_{i}), c$\;
    \textbf{Output:} $\mathcal{S} = \{x_i\}_{i=1}^n \sim P^*(X)$\;
    $\mathcal{S} \gets \varnothing$\;
    $x_0 \gets $ random $d$-dimensional vector in $\textbf{dom}(X)$\;
    \For{sample index $i$ from $1$ to $n+t_\text{burnIn}$}{
        $x_i \gets x_{i-1}$\;
        \For{dimension $j$ from $1$ to $d$}{
            $x_i^{(j)} \gets \mu + \alpha (x_{i-1}^{(j)} - \mu) + (1 - \alpha^2)^{0.5} \sigma \nu$\;
            $x_i \gets [x_i^{(1)}, x_i^{(2)}, \dots, x_i^{(d)} ]^\top$\;
        }
        \If{$i \geq t_\text{burnIn}$}{
            $\mathcal{S} \gets \mathcal{S} \cup \{x_i\}$\;
        }
    }
\caption{Gibbs sampling with Adler's overrelaxation}\label{algorithm_Gibbs_sampling_Adler_overrelaxation}
\end{algorithm2e}
\DecMargin{0.8em}

\textbf{Ordered overrelaxation:}
In \textit{ordered overrelaxation}
\cite{neal1998suppressing}, rather than sampling directly from the opposite location of $x_{i-1}^{(j)}$ with respect to the mean of the conditional distribution, we draw $K-1$ other sample from the conditional distribution, to have a total of $K$ samples including $x_{i-1}^{(j)}$ itself. A good value for $K$ is $20$ \cite{mackay2014youtube}. Then, we see what order statistic $x_{i-1}^{(j)}$ has, i.e., if we sort the $K$ samples, which index it gets. If its order statistic is $k$, we take $k$ samples from the end (opposite direction), i.e. the sample with sorting index $(K-k)$, to be $x_{i}^{(j)}$. The opposite direction behaviour reminds us of the concept of opposition-based learning \cite{tizhoosh2005opposition}, again.

\section{Summary of Characteristics, Discussion, and Conclusion}\label{section_conclusion}

In this section, we briefly summarize sampling algorithms and review their pros and cons. Sampling algorithms divide into two main categories, i.e., survey sampling and sampling from distributions using Monte Carlo methods. In survey sampling, we have a set of data points or vectors and we sample from these points. However, in Monte Carlo methods, we sample from a distribution of data. 

\subsection{Summary of Survey Sampling}

There are various survey sampling methods such as SRS, bootstrapping, stratified sampling, cluster sampling, multistage sampling, network sampling, and snowball sampling. 
SRS is sampling without replacement. Bootstrapping, however, is sampling with replacement. If data can be divided into several strata, stratified sampling draws samples by SRS from each stratum. Likewise, if data can be divided into several clusters, cluster sampling samples clusters as blocks of data, in the cluster level, using SRS. We showed that if the clusters or strata of data are significantly different from each other (i.e., if we have large between-variance and small within-variance), stratified sampling is useful. In contrary, if the clusters or strata of data are mostly similar to each other (i.e., if we have small between-variance and large within-variance), cluster sampling is better to use.
Stratified sampling definitely makes the variance of estimation less than (or equal to) SRS; so it is always better to use stratified sampling rather than SRS, even by dividing data into some not necessarily perfect strata. Cluster sampling may or may not reduce the variance of estimation less than SRS. 
Multistage sampling draws samples stage-wise and can be used to combine different survey sampling methods. Network sampling is a family of methods for sampling sub-networks from a graph or network. A special case of network sampling is snowball sampling which draws some initial samples; then, gives choice of sampling to the selected samples to draw other samples based on their own decision. 

\subsection{Summary of Monte Carlo Methods}

Sampling from distribution of data is usually performed using Monte Carlo methods. Monte Carlo approximation is used for approximating expectation or probability of a function of data over the distribution. Monte Carlo methods can be divided into simple Monte Carlo methods and MCMC. It is noteworthy that Monte Carlo methods are iterative. 

In simple methods, every iteration is independent of the previous iteration because iterations are performed blindly using a simple-to-sample distribution. Some simple Monte Carlo methods are importance sampling and rejection sampling. Importance sampling is used to approximate the expectation of a function of data over the complicated distribution using another simple-to-sample distribution. Rejection sampling is for sampling from complicated distributions using a simple-to-sample upper-bound distribution. 

In MCMC, every iteration is dependent on the previous iteration so sampling is not blind but it has the memory of Markov property. Some MCMC methods are Metropolis algorithm, Metropolis-Hastings algorithm, Gibbs sampling, and slice sampling. Metropolis algorithm draws the next sample using a simple-to-sample distribution whose mean is the previous sample. This proposal function is symmetric in the Metropolis algorithm. By modifying the probability of acceptance of proposal, the Metropolis-Hastings algorithm generalizes the Metropolis algorithm by relaxing the symmetric restriction on the proposal function. Gibbs sampling draws samples using conditional distributions of every coordinate conditioned on the rest of coordinates. Gibbs sampling can be considered as a special case of Metropolis-Hastings algorithm with probability one. 
One of the problems of Metropolis algorithms is choosing an appropriate step size. In contrary, slice sampling is a MCMC method which is robust to the step size. 
Slice sampling considers slices on the sides of previous sample and draws samples in the range of those slices.  

Another issue with Monte Carlo methods is their slow random walk behaviour. Hamiltonian or hybrid Monte Carlo is a Monte Carlo method which is faster for exploration of range of data. Moreover, overrelaxation methods, such as Adler's overrelaxation and ordered overrelaxation, can be used to make Gibbs sampling faster to explore the range of data, especially when the dimensions of data are highly correlated. 

\subsection{Some Other Not Covered Sampling Methods}

For the sake of brevity, we did not cover the Thompson sampling \cite{thompson1933likelihood,russo2018tutorial}, which is useful in reinforcement learning \cite{sutton2018reinforcement}. 
Moreover, exact sampling was not covered. In short, exact sampling is a family of methods which start from some iteration before some state with different initial states. If those multiple runs with different initializations converge to the same state in the time span, we are done. Otherwise, we go back further in the past and start the processes. We do this until all the processes with different initializations converge to the same desired state. 
Exact sampling can be for discrete \cite{burr1955calculation} and continuous \cite{murdoch1998exact} state spaces.

\section*{Acknowledgment}

The authors hugely thank Prof. Mu Zhu \cite{zhu2017lectureSurveySampling,zhu2017lectureMCMC}, Prof. David McKay \cite{mackay2003information,mackay2014youtube}, Prof. Mehdi Molkaraie, and Prof. Kevin Granville whose courses partly covered the materials mentioned in this tutorial paper.

\appendix

\section{Proofs for Section \ref{section_background}}\label{section_appendix_background}

\subsection{Proof for Eq. (\ref{equation_variance_2})}

\begin{align}
\mathbb{V}\text{ar}(\widehat{X}) &= \mathbb{E}\big( \widehat{X}^2 + (\mathbb{E}(\widehat{X}))^2 - 2\widehat{X}\mathbb{E}(\widehat{X}) \big) \nonumber \\
&\overset{(a)}{=} \mathbb{E}(\widehat{X}^2) + (\mathbb{E}(\widehat{X}))^2 - 2\mathbb{E}(\widehat{X})\mathbb{E}(\widehat{X}) \nonumber \\
&= \mathbb{E}(\widehat{X}^2) - (\mathbb{E}(\widehat{X}))^2, \nonumber
\end{align}
where $(a)$ is because expectation is a linear operator and $\mathbb{E}(\widehat{X})$ is not a random variable.

\subsection{Proof for Eq. (\ref{equation_relation_MSE_variance_bias})}

\begin{align*}
\text{MSE}(\widehat{X}) &= \mathbb{E}\big((\widehat{X} - X)^2\big) \nonumber \\
&= \mathbb{E}\big((\widehat{X} - \mathbb{E}(\widehat{X}) + \mathbb{E}(\widehat{X}) - X)^2\big) \nonumber 
\end{align*}
\begin{align*}
&\quad\quad= \mathbb{E}\big((\widehat{X} - \mathbb{E}(\widehat{X}))^2 + (\mathbb{E}(\widehat{X}) - X)^2 \nonumber \\
&\quad\quad~~~~ + 2 (\widehat{X} - \mathbb{E}(\widehat{X})) (\mathbb{E}(\widehat{X}) - X) \big) \nonumber 
\end{align*}
\begin{align*}
&\overset{(a)}{=} \mathbb{E}\big( (\widehat{X} - \mathbb{E}(\widehat{X}))^2 \big) + (\mathbb{E}(\widehat{X}) - X)^2 \nonumber \\
&~~~~ + 2 \underbrace{(\mathbb{E}(\widehat{X}) - \mathbb{E}(\widehat{X}))}_{0} (\mathbb{E}(\widehat{X}) - X) \nonumber \\
&\overset{(b)}{=} \mathbb{V}\text{ar}(\widehat{X}) + (\mathbb{B}\text{ias}(\widehat{X}))^2, \nonumber
\end{align*}
where $(a)$ is because expectation is a linear operator and $X$ and $\mathbb{E}(\widehat{X})$ are not random, and $(b)$ is because of Eqs. (\ref{equation_variance}) and (\ref{equation_bias}).

\subsection{Proof for Eq. (\ref{equation_variance_of_two_variables})}

\begin{align}
&\mathbb{V}\text{ar}(a\widehat{X} + b\widehat{Y}) \overset{(\ref{equation_variance_2})}{=} \mathbb{E}\big((a\widehat{X} + b\widehat{Y})^2\big) - \big(\mathbb{E}(a\widehat{X} + b\widehat{Y})\big)^2 \nonumber \\
&\overset{(a)}{=} a^2\, \mathbb{E}(\widehat{X}^2) + b^2\, \mathbb{E}(\widehat{Y}^2) + 2ab\, \mathbb{E}(\widehat{X}\widehat{Y}) \nonumber \\
&~~~~ - a^2\, (\mathbb{E}(\widehat{X}))^2 - b^2\, (\mathbb{E}(\widehat{Y}))^2 - 2ab\, \mathbb{E}(\widehat{Y})\mathbb{E}(\widehat{Y}) \nonumber \\
&\overset{(\ref{equation_variance_2})}{=} a^2\, \mathbb{V}\text{ar}(\widehat{X}) + b^2\, \mathbb{V}\text{ar}(\widehat{X}) + 2ab\, \mathbb{C}\text{ov}(\widehat{X},\widehat{Y}), \nonumber
\end{align}
where $(a)$ is because of linearity of expectation and the $\mathbb{C}\text{ov}(\widehat{X},\widehat{Y})$ is covariance defined in Eq. (\ref{equation_covariance}). 

\subsection{Proof for Eq. (\ref{equation_expectation_independent})}

\begin{align}
&\mathbb{E}(\widehat{X}\widehat{Y}) \overset{(a)}{=} \int\!\!\! \int \widehat{x} \widehat{y} f(\widehat{x}, \widehat{y}) d\widehat{x} d\widehat{y} \overset{\indep}{=} \int\!\!\! \int \widehat{x} \widehat{y} f(\widehat{x}) f(\widehat{y}) d\widehat{x} d\widehat{y} \nonumber \\
&= \int \widehat{y} f(\widehat{y}) \underbrace{\int \widehat{x} f(\widehat{x}) d\widehat{x}}_{\mathbb{E}(\widehat{X})} d\widehat{y} = \mathbb{E}(\widehat{X}) \underbrace{\int \widehat{y} f(\widehat{y}) d\widehat{y}}_{\mathbb{E}(\widehat{Y})}  \nonumber \\
&= \mathbb{E}(\widehat{X})\, \mathbb{E}(\widehat{Y}) \implies \mathbb{C}\text{ov}(\widehat{X},\widehat{Y}) = 0, \nonumber
\end{align}
where $(a)$ is according to definition of expectation. 

\subsection{Proof for Lemma \ref{lemma_variance_estimate_restate}}

\begin{align*}
\sigma^2 &= \frac{1}{N} \sum_{j=1}^N (x_j - \mu)^2 = \frac{1}{N} \sum_{j=1}^N (x_j^2 - 2 \mu x_j + \mu^2) \\
&= \frac{1}{N} \Big(\sum_{j=1}^N x_j^2 - 2 \mu \sum_{j=1}^N x_j + \mu^2 \sum_{j=1}^N 1\Big) 
\end{align*}
\begin{align*}
&\overset{(\ref{equation_mean_estimate})}{=}  \frac{1}{N} \Big(\sum_{j=1}^N x_j^2 - 2 \mu N \mu + \mu^2 N\Big) \\
&= \frac{1}{N} \Big(\sum_{j=1}^N x_j^2 - \mu^2 N\Big) = \frac{1}{N} \sum_{j=1}^N x_j^2 - \mu^2. \quad \text{Q.E.D.}
\end{align*}

\subsection{Proof for Lemma \ref{equation_variance_of_mean}}

\begin{align*}
\mathbb{V}\text{ar}(\mu) &\overset{(\ref{equation_mean_estimate})}{=} \mathbb{V}\text{ar}( \frac{1}{N} \sum_{j=1}^N x_j ) \\
&\overset{(\ref{equation_variance_multiple_independent})}{=} \frac{1}{N^2} \mathbb{V}\text{ar}(x_1) + \dots + \frac{1}{N^2} \mathbb{V}\text{ar}(x_N) \\
&\overset{(a)}{=} \frac{1}{N^2} \times N \times \mathbb{V}\text{ar}(X) = \frac{1}{N} \mathbb{V}\text{ar}(X) = \frac{1}{N}\, \sigma^2,
\end{align*}
where $(a)$ is because the items of sample are independent and identically distributed (iid). Q.E.D.

\subsection{Proof for Proposition \ref{proposition_variance_unbiased}}

According to Lemma \ref{lemma_variance_estimate_restate} and by comparing Eqs. (\ref{equation_variance_estimate}) and (\ref{equation_variance_estimate_2}), we have (if we multiply the sides by $N$):
\begin{align}
\sum_{j=1}^N (x_j - \mu)^2 = \sum_{j=1}^N x_j^2 - N \mu^2.
\end{align}
Hence, we have:
\begin{align}
\mathbb{E}\Big(\sum_{j=1}^N (x_j - \mu)^2\Big) &= \mathbb{E}\Big(\sum_{j=1}^N x_j^2\Big) - N \mathbb{E}(\mu^2) \nonumber \\
&\overset{(a)}{=} \sum_{j=1}^N \mathbb{E}(x_j^2) - N \mathbb{E}(\mu^2), \label{equation_proposition1_middleEq_1}
\end{align}
where $(a)$ is because expectation is a linear operator. 
According to Eq. (\ref{equation_variance_2}), we have:
\begin{align*}
&\mathbb{E}(x_j^2) = \mathbb{V}\text{ar}(x_j) + \big(\mathbb{E}(x_j)\big)^2 = \sigma^2 + \mu^2, \\
&\mathbb{E}(\mu^2) = \mathbb{V}\text{ar}(\mu) + \big(\mathbb{E}(\mu)\big)^2 \overset{(\ref{equation_variance_of_mean})}{=} \frac{\sigma^2}{N} + \mu^2.
\end{align*}
Plugging these into Eq. (\ref{equation_proposition1_middleEq_1}) gives:
\begin{align*}
&\mathbb{E}\Big(\sum_{j=1}^N (x_j - \mu)^2\Big) = (\sigma^2 + \mu^2) \sum_{j=1}^N (1) - N (\frac{\sigma^2}{N} + \mu^2) \\
&= N \sigma^2 + N \mu^2 - \sigma^2 - N \mu^2 = (N-1)\, \sigma^2. 
\end{align*}
Hence, the expectation of Eq. (\ref{equation_variance_unbiased}) is:
\begin{align*}
\mathbb{E}(\sigma^2) &= \frac{1}{N-1} \mathbb{E}\Big(\sum_{j=1}^N (x_j - \mu)^2\Big) \\
&= \frac{1}{N-1} (N-1)\, \sigma^2 = \sigma^2. \quad \text{Q.E.D.}
\end{align*}

\subsection{Proof for Proposition \ref{proposition_HT_unbiased}}

\begin{align*}
&\widehat{\theta}_\text{HT} \overset{(\ref{equation_HT_estimator})}{=} \sum_{j \in \mathcal{S}} \frac{h(x_j)}{\pi_j} = \sum_{j = 1}^{N} \frac{h(x_j)}{\pi_j} \mathbb{I}_j\\
&\mathbb{E}(\widehat{\theta}_\text{HT}) \overset{(a)}{=} \sum_{j = 1}^{N} \frac{h(x_j)}{\pi_j} \mathbb{E}(\mathbb{I}_j) \overset{(b)}{=} \sum_{j = 1}^{N} \frac{h(x_j)}{\pi_j} \pi_j \\
&~~~~~~~~~~~~= \sum_{j = 1}^{N} h(x_j) \overset{(\ref{equation_estimator_population_quantity})}{=} \theta,
\end{align*}
where $(a)$ is because expectation is linear and $(b)$ is because $\mathbb{E}(\mathbb{I}_j) = (0 \times (1- \pi_j)) + (1 \times \pi_j) = \pi_j$. Q.E.D.

\section{Proofs for Section \ref{section_survey_sampling}}\label{section_appendix_survey_sampling}

\subsection{Proof for Proposition \ref{proposition_expectation_variance_of_mean_SRS}}

This proof is based on \cite{zhu2017lectureSurveySampling}. 

The expectation of mean of sample by SRS is:
\begin{align*}
\mathbb{E}(\widehat{\mu}) &\overset{(\ref{equation_mean_SRS})}{=} \mathbb{E}(\frac{1}{n} \sum_{j=1}^N x_j\, \mathbb{I}_j) \overset{(a)}{=} \frac{1}{n} \sum_{j=1}^N x_j\, \mathbb{E}(\mathbb{I}_j) \\
&\overset{(b)}{=} \frac{1}{n} \sum_{j=1}^N x_j\, \pi_j = \frac{1}{n} \sum_{j=1}^N x_j\, \frac{n}{N} = \frac{1}{N} \sum_{j=1}^N x_j = \mu.
\end{align*}
where $(a)$ is because expectation is linear and $(b)$ is because:
\begin{align*}
\mathbb{E}(\mathbb{I}_j) &= (0 \times \mathbb{P}(j \not\in \mathcal{S})) + (1 \times \mathbb{P}(j \in \mathcal{S})) \\
&= \mathbb{P}(j \in \mathcal{S}) = \pi_j
\end{align*}
Note that $\mathbb{E}(\widehat{\mu}) = \mu$, proved above, was expected because according to Proposition \ref{proposition_HT_unbiased}, the mean of sample in SRS is an \textit{unbiased} estimate of the mean of whole data. Hence, according to Definition \ref{definition_unbiased_estimator}, the expectation is the variable itself. 

\begin{lemma}
Variance can be restated as:
\begin{align}\label{equation_variance_restate_middleOfProof}
&\sigma^2 = \frac{1}{N(N-1)} \Big((N-1) \sum_{j=1}^N x_j^2 - \sum_{j \neq \ell} x_j x_\ell\Big).
\end{align}
\end{lemma}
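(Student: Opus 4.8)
The plan is to start from the unbiased variance estimator in Eq.~(\ref{equation_variance_unbiased}), namely $\sigma^2 = \frac{1}{N-1}\sum_{j=1}^N (x_j-\mu)^2$, and to rewrite everything in terms of the raw sums $\sum_{j} x_j^2$ and the off-diagonal sum $\sum_{j\neq\ell} x_j x_\ell$. First I would invoke the algebraic identity $\sum_{j=1}^N (x_j - \mu)^2 = \sum_{j=1}^N x_j^2 - N\mu^2$, which was already established while proving Lemma~\ref{lemma_variance_estimate_restate} and Proposition~\ref{proposition_variance_unbiased}. This immediately reduces the task to re-expressing the single term $N\mu^2$.

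Next I would expand the mean using $\mu = \frac{1}{N}\sum_{j=1}^N x_j$ from Eq.~(\ref{equation_mean_estimate}), which gives $N\mu^2 = \frac{1}{N}\big(\sum_{j=1}^N x_j\big)^2$. The key step is then to split the square of the sum into its diagonal and off-diagonal contributions, $\big(\sum_{j=1}^N x_j\big)^2 = \sum_{j=1}^N x_j^2 + \sum_{j\neq\ell} x_j x_\ell$, so that $N\mu^2 = \frac{1}{N}\sum_{j} x_j^2 + \frac{1}{N}\sum_{j\neq\ell} x_j x_\ell$. Substituting this back and collecting the coefficient of $\sum_j x_j^2$ yields
\begin{align*}
\sigma^2 &= \frac{1}{N-1}\Big(\sum_{j=1}^N x_j^2 - \frac{1}{N}\sum_{j=1}^N x_j^2 - \frac{1}{N}\sum_{j\neq\ell} x_j x_\ell\Big) \\
&= \frac{1}{N(N-1)}\Big((N-1)\sum_{j=1}^N x_j^2 - \sum_{j\neq\ell} x_j x_\ell\Big),
\end{align*}
which is exactly Eq.~(\ref{equation_variance_restate_middleOfProof}).

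There is no genuine obstacle here; the argument is pure bookkeeping. The one point that requires care is keeping the normalization factors $N$ and $N-1$ straight---in particular, remembering that we begin from the \emph{unbiased} estimator (division by $N-1$) rather than the MLE form (division by $N$), so that the final denominator is $N(N-1)$ rather than $N^2$. A useful sanity check is that both orderings $(j,\ell)$ and $(\ell,j)$ are counted in $\sum_{j\neq\ell}$, matching the symmetric expansion of $\big(\sum_j x_j\big)^2$.
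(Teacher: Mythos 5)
Your proof is correct and follows essentially the same route as the paper: both start from the unbiased estimator, reduce $\sum_j(x_j-\mu)^2$ to $\sum_j x_j^2 - N\mu^2$, and then split $\big(\sum_j x_j\big)^2$ into diagonal and off-diagonal parts before collecting terms. The only cosmetic difference is that you cite the intermediate identity from earlier lemmas rather than re-expanding the square from scratch.
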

\begin{proof}
\begin{align*}
&\sigma^2 \overset{(\ref{equation_variance_unbiased})}{=} \frac{1}{N-1} \sum_{j=1}^N (x_j - \mu)^2  \\
&= \frac{1}{N-1} \sum_{j=1}^N (x_j^2 + \mu^2 - 2 \mu x_j) \\
&= \frac{1}{N-1} \Big(\sum_{j=1}^N x_j^2 + \mu^2 \sum_{j=1}^N 1 - 2 \mu \sum_{j=1}^N x_j\Big) \\
&\overset{(\ref{equation_mean_estimate})}{=} \frac{1}{N-1} \Big(\sum_{j=1}^N x_j^2 + \mu^2 N - 2 \mu N \mu\Big)  \\
&= \frac{1}{N-1} \Big(\sum_{j=1}^N x_j^2 - N \mu^2 \Big).
\end{align*}
On the other hand, the squared mean can be restated as:
\begin{align*}
\mu^2 \overset{(\ref{equation_mean_estimate})}{=} \Big(\frac{1}{N} \sum_{j=1}^N x_j\Big)^2 = \frac{1}{N^2} \Big( \sum_{j=1}^N x_j^2 + \sum_{j \neq \ell} x_j x_\ell \Big).
\end{align*}
Therefore:
\begin{align*}
\sigma^2 &= \frac{1}{N-1} \Big(\sum_{j=1}^N x_j^2 - \frac{1}{N} \Big( \sum_{j=1}^N x_j^2 + \sum_{j \neq \ell} x_j x_\ell \Big) \Big) \\
&= \frac{1}{N(N-1)} \Big((N-1) \sum_{j=1}^N x_j^2 - \sum_{j \neq \ell} x_j x_\ell \Big).
\end{align*}
\end{proof}

The variance of mean of sample by SRS is:
\begin{align*}
&\mathbb{V}\text{ar}(\widehat{\mu}) \overset{(\ref{equation_variance_multiple})}{=} \frac{1}{n^2} \Big( \sum_{j=1}^N x_j^2 \mathbb{V}\text{ar}(\mathbb{I}_j) + \sum_{j \neq \ell} x_j x_\ell \mathbb{C}\text{ov}(\mathbb{I}_j, \mathbb{I}_\ell) \Big) \\
&\overset{(a)}{=} \frac{1}{n^2} \Big( \sum_{j=1}^N x_j^2 \pi_j (1-\pi_j) + \sum_{j \neq \ell} x_j x_\ell (\pi_{j\ell}-\pi_j \pi_\ell) \Big) 
\end{align*}
\begin{align*}
&\overset{(b)}{=} \frac{1}{n^2} \Big( \sum_{j=1}^N x_j^2 \frac{n}{N} (1-\frac{n}{N}) \\
&~~~~~~~~~~~~~~~~~~~~~~~~~+ \sum_{j \neq \ell} x_j x_\ell (\frac{n}{N}\frac{n-1}{N-1}-\frac{n}{N}\frac{n}{N}) \Big) \\
&= \frac{1}{n^2} \frac{n}{N} \Big( \sum_{j=1}^N x_j^2 (\frac{N-n}{N}) + \sum_{j \neq \ell} x_j x_\ell \frac{n-N}{N(N-1)} \Big) 
\end{align*}
\begin{align*}
&= \frac{1}{n^2} \frac{n}{N} \Big( \sum_{j=1}^N x_j^2 (\frac{N-n}{N}) + \sum_{j \neq \ell} x_j x_\ell \frac{n-N}{N(N-1)} \Big) \\
&= \frac{1}{n^2} \frac{n}{N} \frac{N-n}{N} \frac{1}{N-1} \Big( (N-1) \sum_{j=1}^N x_j^2 - \sum_{j \neq \ell} x_j x_\ell \Big) \\
&\overset{(\ref{equation_variance_restate_middleOfProof})}{=} \frac{1}{n^2} \frac{n}{N} \frac{N-n}{N} \frac{1}{N-1} \Big( N(N-1)\,\sigma^2 \Big) \\
&= \big(1 - \frac{n}{N}\big) \frac{\sigma^2}{n},
\end{align*}
where $(a)$ is because variance of the Bernoulli distribution is Eq. (\ref{equation_Bernoulli_variance}). Also, according to Eq. (\ref{equation_covariance}), we have:
\begin{align*}
&\mathbb{C}\text{ov}(\mathbb{I}_j, \mathbb{I}_\ell) := \mathbb{E}(\mathbb{I}_j \mathbb{I}_\ell) - \mathbb{E}(\mathbb{I}_j)\,\mathbb{E}(\mathbb{I}_\ell) \\
&= \Big[ [0 \times 1 \times \mathbb{P}(j \not \in \mathbb{S} \land \ell \in \mathbb{S})] \\
&+ [1 \times 0 \times \mathbb{P}(j \in \mathbb{S} \land \ell \not \in \mathbb{S})] \\
&+ [0 \times 0 \times \mathbb{P}(j \not \in \mathbb{S} \land \ell \not \in \mathbb{S})] \\
&+ [1 \times 1 \times \mathbb{P}(j \in \mathbb{S} \land \ell \in \mathbb{S})] \Big] \\
&- \Big[ [0 \times \mathbb{P}(j \not \in \mathbb{S})] + [1 \times \mathbb{P}(j \in \mathbb{S})] \Big] \\
&~~~~~~~~\Big[ [0 \times \mathbb{P}(\ell \not \in \mathbb{S})] + [1 \times \mathbb{P}(\ell \in \mathbb{S})] \Big] \\
&= \mathbb{P}(j \in \mathbb{S} \land \ell \in \mathbb{S}) + \mathbb{P}(j \in \mathbb{S}) \mathbb{P}(\ell \in \mathbb{S}) = \pi_{j \ell} - \pi_j \pi_\ell.
\end{align*}
Moreover, $(b)$ is because:
\begin{align*}
&\pi_j = \mathbb{P}(j \in \mathcal{S}) = \frac{n}{N}, \\
&\pi_\ell = \mathbb{P}(\ell \in \mathcal{S}) = \frac{n}{N}, \\
&\pi_{j\ell} = \mathbb{P}(j \in \mathcal{S} \land \ell \in \mathcal{S}) \\
&~~~~~~= \mathbb{P}(j \in \mathcal{S} | \ell \in \mathcal{S})\, \mathbb{P}(\ell \in \mathcal{S}) = \frac{n-1}{N-1} \frac{n}{N}. \\
\end{align*}

\subsection{Proof for Corollary \ref{corollary_SRS_varianceOfMean_proportional}}

\begin{align*}
&\mathbb{V}\text{ar}(\widehat{\mu}) \overset{(\ref{equation_SRS_mean_variance})}{=} \big(1 - \frac{n}{N}\big)\, \frac{\sigma^2}{n} \overset{(\ref{equation_variance_with_strata})}{=} 
\\
&\frac{1}{n} (1 - \frac{n}{N}) \bigg[ \sum_{k=1}^K \frac{N_k - 1}{N - 1} \sigma_k^2 + \sum_{k=1}^K \frac{N_k}{N-1} (\mu_k - \mu)^2 \bigg].
\end{align*}

\subsection{Proof for Proposition \ref{proposition_expectation_variance_of_mean_stratified_sampling}}

\begin{align*}
\mathbb{E}(\widehat{\mu}) &\overset{(\ref{equation_mean_estimate_stratified_sampling})}{=} \mathbb{E}\big(\sum_{k=1}^K \frac{N_k}{N} \widehat{\mu}_k\big) = \sum_{k=1}^K \frac{N_k}{N} \mathbb{E}(\widehat{\mu}_k) \\
&\overset{(\ref{equation_stratifiedSampling_mean_expectation})}{=} \sum_{k=1}^K \frac{N_k}{N} \mu_k = \mu,
\end{align*}
which makes sense because according to Proposition \ref{proposition_HT_unbiased}, the HT estimator is unbiased. 

\begin{align*}
\mathbb{V}\text{ar}(\widehat{\mu}) &\overset{(\ref{equation_mean_estimate_stratified_sampling})}{=} \mathbb{V}\text{ar}\big(\sum_{k=1}^K \frac{N_k}{N} \widehat{\mu}_k\big) \overset{(a)}{=} \sum_{k=1}^K \big(\frac{N_k}{N}\big)^2 \mathbb{V}\text{ar}(\widehat{\mu}_k) \\
&\overset{(\ref{equation_stratifiedSampling_mean_variance})}{=} \sum_{k=1}^K \big(\frac{N_k}{N}\big)^2 \big(1 - \frac{n_k}{N_k}\big)\, \frac{\sigma_k^2}{n},
\end{align*}
where $(a)$ is because of Eq. (\ref{equation_variance_multiple_independent}) where the strata are disjoint and thus independent. Q.E.D.

\subsection{Proof for Eq. (\ref{equation_mean_with_strata}) in Lemma \ref{lemma_mean_variance_with_strata}}

According to Eq. (\ref{equation_mean_estimate}), the actual mean of the $k$-th stratum is:
\begin{align*}
&\mu_k = \frac{1}{N_k} \sum_{j=1}^{N_k} x_{k,j} \implies \sum_{j=1}^{N_k} x_{k,j} = N_k\, \mu_k \nonumber \\
&\therefore\quad\mu = \frac{1}{N} \sum_{k=1}^K \Big[\sum_{j=1}^{N_k} x_{k,j}\Big] = \sum_{k=1}^K \frac{N_k}{N} \mu_k. \quad \text{Q.E.D.}
\end{align*}

\subsection{Proof for Eq. (\ref{equation_variance_with_strata}) in Lemma \ref{lemma_mean_variance_with_strata}}

This proof is based on \cite{zhu2017lectureSurveySampling}. 
According to Eq. (\ref{equation_variance_unbiased}), the total variance of data, with $k$ strata, is:
\begin{align*}
\sigma^2 &= \frac{1}{N-1} \sum_{k=1}^K \sum_{j=1}^{N_k} (x_{k,j} - \mu)^2 \\
&= \frac{1}{N-1} \sum_{k=1}^K \sum_{j=1}^{N_k} (x_{k,j} - \mu_k + \mu_k - \mu)^2 \\
&= \frac{1}{N-1} \Big[ \sum_{k=1}^K \sum_{j=1}^{N_k} (x_{k,j} - \mu_k)^2 + \sum_{k=1}^K \sum_{j=1}^{N_k} (\mu_k - \mu)^2 \\
&~~~~~~~~~~~~ + 2 \sum_{k=1}^K \sum_{j=1}^{N_k} (x_{k,j} - \mu_k) (\mu_k - \mu) \Big].
\end{align*}
As the strata are independent (because they are disjoint), the third term is zero. 
The first term is $\sum_{k=1}^K (N_k - 1) \sigma_k^2$, according to Eq. (\ref{equation_variance_unbiased_stratum}). The second term is $\sum_{k=1}^K \sum_{j=1}^{N_k} (\mu_k - \mu)^2 = \sum_{k=1}^K (\mu_k - \mu)^2 \sum_{j=1}^{N_k} 1 = \sum_{j=1}^{N_k} N_k (\mu_k - \mu)^2$. 
Hence:
\begin{align*}
\sigma^2 &= \frac{1}{N-1} \Big[ \sum_{k=1}^K (N_k - 1) \sigma_k^2 + \sum_{j=1}^{N_k} N_k (\mu_k - \mu)^2 \Big].
\end{align*}

\subsection{Proof for Corollary \ref{corollary_stratified_sampling_varianceOfMean_proportional}}

According to Eq. (\ref{equation_stratifiedSampling_mean_variance_total}), we have:
\begin{align*}
\mathbb{V}\text{ar}(\widehat{\mu}) &= \sum_{k=1}^K \big(\frac{N_k}{N}\big)^2 \big(1 - \frac{n_k}{N_k}\big)\, \frac{\sigma_k^2}{n_k} \\
& \overset{(\ref{equation_proportional_allocation})}{=} \sum_{k=1}^K \big(\frac{N_k}{N}\big)^2 \big(1 - \frac{n N_k}{N N_k}\big)\, \frac{\sigma_k^2 N}{n N_k} \\
&= \frac{1}{n} (1 - \frac{n}{N}) \sum_{k=1}^K \big(\frac{N_k}{N}\big)^2 \big(\frac{N}{N_k}\big)\, \sigma_k^2 \\
&= \frac{1}{n} (1 - \frac{n}{N}) \sum_{k=1}^K \big(\frac{N_k}{N}\big)\, \sigma_k^2. \quad~~~~ \text{Q.E.D.}
\end{align*}

\subsection{Proof for Proposition \ref{proposition_expectation_variance_of_mean_cluster_sampling}}

According to Proposition \ref{proposition_HT_unbiased}, the HT estimator is unbiased; hence:
\begin{align*}
\mathbb{E}(\widehat{\mu}) = \mu.
\end{align*}
Moreover:
\begin{align*}
\mathbb{V}\text{ar}(\widehat{\mu}) &\overset{(\ref{equation_mean_estimate_cluster_sampling_2})}{=} \mathbb{V}\text{ar}(\frac{K}{N}\, \widehat{\mu}_*) =
\frac{K^2}{N^2} \mathbb{V}\text{ar}(\widehat{\mu}_*)
\\
&\overset{(\ref{equation_SRS_mean_variance})}{=} \frac{K^2}{N^2} (1 - \frac{c}{K}) \frac{\sigma_*^2}{c},
\end{align*}
where it is noticed that $\widehat{\mu}_*$ is the mean of cluster level with SRS sampling approach according to the definition of cluster sampling (see Definition \ref{definition_cluster_sampling}). Q.E.D.

\subsection{Proof for Corollary \ref{corollary_cluster_sampling_varianceOfMean_equalClusters}}

According to Corollary \ref{corollary_mean_variance_with_clusters}, we have:
\begin{align*}
&\mu_k = \frac{1}{N_k} \sum_{j=1}^{N_k} x_{k,j} \implies \sum_{j=1}^{N_k} x_{k,j} = N_k\, \mu_k, \\
&\therefore~~ \mu = \frac{1}{N} \sum_{k=1}^K \sum_{j=1}^{N_k} x_{k,j} = \frac{1}{N} \sum_{k=1}^K N_k \mu_k, \\
&\implies \sum_{k=1}^K N_k \mu_k = N \mu
\end{align*}
Hence:
\begin{align*}
\frac{1}{K} \sum_{k'=1}^K \tau_{k'} = \frac{1}{K} \sum_{k'=1}^K N_{k'} \mu_{k'} = \frac{1}{K} N \mu \overset{(\ref{equation_euqal_cluster_size})}{=} L\, \mu.
\end{align*}
Also:
\begin{align*}
\tau_k = N_k \mu_k \overset{(\ref{equation_euqal_cluster_size})}{=} L\, \mu_k.
\end{align*}
According to Eq. (\ref{equation_clusterSampling_mean_variance_total}), we have:
\begin{align*}
\mathbb{V}\text{ar}(\widehat{\mu}) &= \frac{K^2}{N^2} (1 - \frac{c}{K}) \frac{1}{c} \frac{1}{K-1} \sum_{k=1}^K (\tau_k - \frac{1}{K} \sum_{k'=1}^K \tau_{k'})^2 \\
&\overset{(\ref{equation_euqal_cluster_size})}{=} \frac{1}{L^2} (1 - \frac{c}{K}) \frac{1}{c} \frac{1}{K-1} \sum_{k=1}^K (L \mu_k - L \mu)^2 \\
&= \frac{1}{c} \big(1 - \frac{c}{K}\big) \bigg[ \frac{1}{K-1} \sum_{k=1}^K (\mu_k - \mu)^2 \bigg]. \quad \text{Q.E.D.}
\end{align*}

\section{Proofs for Section \ref{section_Monte_Carlo}}\label{section_appendix_Monte_Carlo}

\subsection{Proof for Proposition \ref{proposition_importance_sampling_expectation}}

Consider the term $\frac{P^*(X)}{Q(X)} h(X)$ in Eq. (\ref{equation_importance_sampling_expectation}). 
The expectation of this term over the simple distribution $Q(X)$ is:
\begin{align*}
\mathbb{E}_{\sim Q(X)}\Big( \frac{P^*(X)}{Q(X)} h(X) \Big) &\overset{(\ref{equation_MC_approximate_expectation_exact})}{=} \int \frac{P^*(x)}{Q(x)} h(x)\, Q(x)\, dx \\
&= \int P^*(x)\, h(x)\, dx,
\end{align*}
which is, up to scale, the desired expectation of $h(x)$ over the complicated distribution $f(x)$, i.e., $\mathbb{E}_{\sim f(x)} (h(x))$. Q.E.D.

\subsection{Proof for Lemma \ref{lemma_stationary_balance_condition}}

Proof is based on \cite{zhu2017lectureMCMC}.
\begin{align*}
\int \mathbb{P}(u)\, A(v; u)\, du &\overset{(\ref{equation_stationary_balance_condition})}{=} \int \mathbb{P}(v)\, A(u; v)\, du \\
&= \mathbb{P}(v) \int A(u; v)\, du \overset{(a)}{=} \mathbb{P}(v),
\end{align*}
where $(a)$ is because the integral is equal to one because sums over all possible transitions from states to the state $v$. 
According to Eq. (\ref{equation_stationary_distribution_sumInward}), the above expression implies the definition of a stationary distribution. Q.E.D.

\subsection{Proof for Proposition \ref{proposition_Metropolis_symmetric}}

According to Eq. (\ref{equation_Metropolis_transition_function}), the Eq. (\ref{equation_stationary_balance_condition}) in Metropolis algorithm becomes:
\begin{align*}
&P^*(x_{i-1})\, A(x_{i}; x_{i-1})  \\
&\overset{(\ref{equation_Metropolis_transition_function})}{=} \min\Big(P^*(x_{i-1})\, Q(x_{i}; x_{i-1}) \frac{P^*(x_i)}{P^*(x_{i-1})}, \\
&~~~~~~~~~~~~~~~~~~~~~~ P^*(x_{i-1})\, Q(x_{i}; x_{i-1})\Big). \\
&= \min\Big( P^*(x_i)\, Q(x_{i}; x_{i-1}), P^*(x_{i-1})\, Q(x_{i}; x_{i-1}) \Big) \\
&\overset{(\ref{equation_Metropolis_symmetric_proposal_function})}{=} \min\Big( P^*(x_i)\, Q(x_{i-1}; x_{i}), P^*(x_{i-1})\, Q(x_{i}; x_{i-1}) \Big) \\
&\overset{(\ref{equation_Metropolis_transition_function})}{=} P^*(x_{i})\, A(x_{i-1}; x_{i}),
\end{align*}
which is a stationary distribution according to Eq. (\ref{equation_stationary_balance_condition}). Q.E.D.

\subsection{Proof for Proposition \ref{proposition_Gibbs_special_case_Metropolis}}

Proof is based on \cite{zhu2017lectureMCMC}.
We define $x_i^{(-j)} := [x_i^{(1)}, \dots, x_i^{(j-1)}, x_i^{(j+1)}, \dots, x_i^{(d)}]^\top$. 
In Gibbs sampling, we have:
\begin{align*}
Q(x_i; x_{i-1}) = P^*(x_i^{(j)} | x_{i-1}^{(-j)}) \overset{(a)}{=} P^*(x_i^{(j)} | x_{i}^{(-j)}),
\end{align*}
where $(a)$ is because in Gibbs sampling, the dimensions of  a sample are updated conditioned on the updated values of other dimensions in the same sample. 

According to the above expression and Eq. (\ref{equation_Metropolis_Hastings_p_accept}), we have:
\begin{align*}
p_\text{accept} &= \min\Big(\frac{P^*(x_i)\, Q(x_{i-1}; x_i)}{P^*(x_{i-1})\, Q(x_i; x_{i-1})}, 1\Big) \\
&= \min\Big(\frac{P^*(x_i)\, P^*(x_{i-1}^{(j)} | x_{i-1}^{(-j)})}{P^*(x_{i-1})\, P^*(x_{i}^{(j)} | x_{i}^{(-j)})}, 1\Big).
\end{align*}

By marginalization, we have:
\begin{align*}
&P^*(x_i) = P^*(x_i^{(j)} | x_{i}^{(-j)})\, P^*(x_{i}^{(-j)}), \\
&P^*(x_{i-1}) = P^*(x_{i-1}^{(j)} | x_{i-1}^{(-j)})\, P^*(x_{i-1}^{(-j)}).
\end{align*}
Therefore:
\begin{align*}
&p_\text{accept} \\
&= \min\Big(\frac{P^*(x_i^{(j)} | x_{i}^{(-j)})\, P^*(x_{i}^{(-j)})\, P^*(x_{i-1}^{(j)} | x_{i-1}^{(-j)})}{P^*(x_{i-1}^{(j)} | x_{i-1}^{(-j)})\, P^*(x_{i-1}^{(-j)})\, P^*(x_{i}^{(j)} | x_{i}^{(-j)})}, 1\Big) \\
&= \min\Big(\frac{ P^*(x_{i}^{(-j)})}{P^*(x_{i-1}^{(-j)})}, 1\Big).
\end{align*}
In transition from $x_{i-1}$ to $x_i$ for the $j$-th dimension, only the $j$-th dimension change; therefore, excluding the $j$-th dimension, these two are equal:
\begin{align*}
x_{i-1}^{(j)} = x_{i}^{(j)}.
\end{align*}
Hence:
\begin{align*}
&p_\text{accept} = \min\Big(\frac{ P^*(x_{i}^{(-j)})}{P^*(x_{i}^{(-j)})}, 1\Big) = \min(1,1) = 1.
\end{align*}
Therefore, Gibbs sampling accepts the proposed sample with probability one. Q.E.D.

\bibliography{References}
\bibliographystyle{icml2016}

\end{document}